\newcommand{\blockcomment}[1]{} 
\DeclareMathOperator{\tr}{Tr}
\newcommand{\id}{\mathds{1}}
\newcommand{\bra}[1]{\left\langle #1 \right|}
\newcommand{\ket}[1]{\left| #1 \right\rangle}
\newcommand{\braket}[2]{\left\langle #1 \middle| #2 \right\rangle}
\newcommand{\ketbra}[2]{\left|#1\middle\rangle\middle\langle#2\right|}
\newcommand{\proj}[1]{\left|#1\middle\rangle\middle\langle#1\right|}
\newtheorem{theorem}{Theorem}
\newtheorem{corollary}[theorem]{Corollary}
\newtheorem{lemma}[theorem]{Lemma}
\newtheorem*{lemma*}{Lemma}
\newtheorem{definition}[theorem]{Definition}
\newcommand{\subalign}[1]{%
  \vcenter{%
    \Let@ \restore@math@cr \default@tag
    \baselineskip\fontdimen10 \scriptfont\tw@
    \advance\baselineskip\fontdimen12 \scriptfont\tw@
    \lineskip\thr@@\fontdimen8 \scriptfont\thr@@
    \lineskiplimit\lineskip
    \ialign{\hfil$\m@th\scriptstyle##$&$\m@th\scriptstyle{}##$\hfil\crcr
      #1\crcr
    }%
  }%
}
\let\save@mathaccent\mathaccent
\newcommand*\if@single[3]{%
  \setbox0\hbox{${\mathaccent"0362{#1}}^H$}%
  \setbox2\hbox{${\mathaccent"0362{\kern0pt#1}}^H$}%
  \ifdim\ht0=\ht2 #3\else #2\fi
  }
\newcommand*\rel@kern[1]{\kern#1\dimexpr\macc@kerna}
\newcommand*\widebar[1]{\@ifnextchar^{{\wide@bar{#1}{0}}}{\wide@bar{#1}{1}}}
\newcommand*\wide@bar[2]{\if@single{#1}{\wide@bar@{#1}{#2}{1}}{\wide@bar@{#1}{#2}{2}}}
\newcommand*\wide@bar@[3]{%
  \begingroup
  \def\mathaccent##1##2{%
    \let\mathaccent\save@mathaccent
    \if#32 \let\macc@nucleus\first@char \fi
    \setbox\z@\hbox{$\macc@style{\macc@nucleus}_{}$}%
    \setbox\tw@\hbox{$\macc@style{\macc@nucleus}{}_{}$}%
    \dimen@\wd\tw@
    \advance\dimen@-\wd\z@
    \divide\dimen@ 3
    \@tempdima\wd\tw@
    \advance\@tempdima-\scriptspace
    \divide\@tempdima 10
    \advance\dimen@-\@tempdima
    \ifdim\dimen@>\z@ \dimen@0pt\fi
    \rel@kern{0.6}\kern-\dimen@
    \if#31
      \overline{\rel@kern{-0.6}\kern\dimen@\macc@nucleus\rel@kern{0.4}\kern\dimen@}%
      \advance\dimen@0.4\dimexpr\macc@kerna
      \let\final@kern#2%
      \ifdim\dimen@<\z@ \let\final@kern1\fi
      \if\final@kern1 \kern-\dimen@\fi
    \else
      \overline{\rel@kern{-0.6}\kern\dimen@#1}%
    \fi
  }%
  \macc@depth\@ne
  \let\math@bgroup\@empty \let\math@egroup\macc@set@skewchar
  \mathsurround\z@ \frozen@everymath{\mathgroup\macc@group\relax}%
  \macc@set@skewchar\relax
  \let\mathaccentV\macc@nested@a
  \if#31
    \macc@nested@a\relax111{#1}%
  \else
    \def\gobble@till@marker##1\endmarker{}%
    \futurelet\first@char\gobble@till@marker#1\endmarker
    \ifcat\noexpand\first@char A\else
      \def\first@char{}%
    \fi
    \macc@nested@a\relax111{\first@char}%
  \fi
  \endgroup
}
\newcommand{\T}{{\mathrm{ST}}}
\newcommand{\n}{{(n)}}
\newcommand{\on}{{\otimes n}}
\begin{document}

\title{A de Finetti theorem for quantum causal structures}
\author{F.~Costa}
\email{fabio.costa@su.se}
\affiliation{Nordita, Stockholm University and KTH Royal Institute of Technology, Hannes Alfv{\'e}ns v{\"a}g 12 Stockholm, 106 91, Sweden}
\affiliation{School of Mathematics and Physics, The University of Queensland, St Lucia, QLD 4072, Australia}
\author{J.~Barrett}
\affiliation{Quantum Group, Department of Computer Science, University of Oxford}
\author{S.~Shrapnel}
\affiliation{ARC Centre for Engineered Quantum Systems, School of Mathematics and Physics, The University of Queensland, St Lucia, QLD 4072, Australia}

\begin{abstract}
What does it mean for a causal structure to be `unknown'? Can we even talk about `repetitions' of an experiment without prior knowledge of causal relations? And under what conditions can we say that a set of processes with arbitrary, possibly indefinite, causal structure are independent and identically distributed? Similar questions for classical probabilities, quantum states, and quantum channels are beautifully answered by so-called ``de Finetti theorems'', which connect a simple and easy-to-justify condition---symmetry under exchange---with a very particular multipartite structure: a mixture of identical states/channels. 
Here we extend the result to processes with arbitrary causal structure, including indefinite causal order and multi-time, non-Markovian processes applicable to noisy quantum devices. The result also implies a new class of de Finetti theorems for quantum states subject to a large class of linear constraints, which can be of independent interest.
\end{abstract}

\maketitle

\section{Introduction}

The possibility to repeat an experiment is one of the fundamental tenets of the scientific method: after a sufficient number of repetitions, statistical analysis typically enables us to characterise the system or process of interest, to decide between competing hypotheses, and so on. However, a potentially unsettling question underlies this paradigm: what counts as a repetition? If we toss a coin hundreds of thousands of times \cite{Bartos2023}, are we repeating the same experiment or are we rather performing a set of different experiments? And are we allowed to combine tosses from different coins?

The de Finetti theorem \cite{DeFinetti1929, deFinetti1937, Hewitt1955} offers an elegant answer: it states that a joint probability for a set of random variables that is $a)$ invariant under permutations and $b)$ the marginal of a probability of an arbitrarily larger set, also invariant under permutations, must take the form
\begin{multline}
P^{(n)}(a^1,\dots,a^n) \\
 = \int dq P(q) q(a^1)\cdots q(a^n),
\label{classicaldefinetti}
\end{multline}
for a unique normalised measure $P(q)$ over the space of single-variable probabilities $q$. The striking consequence is that a set of exchangeable variables (i.e., satisfying conditions $a$ and $b$ above) can always be interpreted as independent and identically distributed (i.i.d.), up to a global uncertainty $P(q)$ on the single-trial probability $q$. 

De Finetti's theorem is particularly satisfying from a Bayesian point of view: one never needs to invoke an `unknown' probability---which would make little sense if probabilities represent degrees of belief---nor needs one introduce any ad-hoc notion of repeatability. As long as they can justify exchangeability, the Bayesian can treat a set of variables \emph{as if} they were multiple trials of the same experiment, with each trial distributed according to the `unknown' probability $q$, and where $P(q)$ represents the prior knowledge about such a probability. Conditioning on a larger and larger set of observed variables allows one to update $P(q)$, eventually converging to a particular single-trial $q^*$, $P(q) \rightarrow \delta(q - q^*)$. This justifies the idea of `discovering' the unknown probability through repeated trials and gives a principle-based support for Bayesian methods. 
Apart from the foundational significance, classical de Finetti-type theorems have extensive applications in pure and applied probability theory \cite{Kingman1978, Aldous1985, OBrien1988}.  

De Finetti's theorem extends in a direct way to quantum states \cite{Stoermer1969, Hudson1976} and channels \cite{Fuchs2004}: an exchangeable multipartite state/channel is always a mixture of product states/channels. Apart from liberating the Bayesian from the uncomfortable notion of `unknown state'\footnote{Regardless of one's ontological view on pure quantum states, it should always be possible to use mixed states to represent incomplete subjective  knowledge. In this sense, `unknown mixed states' are as problematic as `unknown probability distributions', calling for a principle-based approach to repeatability and discovery.} \cite{Caves2002a} and grounding the use of Bayesian methods in quantum information \cite{Schack2001, Granade2017qinferstatistical}, these results have wide-ranging applications in many-body physics \cite{Fannes1980, Krumnow2017}, cryptography \cite{Renner2007, Christandl2009, Renner2009}, quantum information \cite{Brandao2010, Navascues2009, Brandao2011a, Brandao2011, Brandao2017}, and quantum foundations \cite{Hudson1981, Barrett2009, Christandl2009a, ArnonFriedman2015}.

However, some natural questions emerging in quantum theory escape known de Finetti results. Consider an experiment that comprises multiple measurements and operations distributed in space and time. Can the causal relations between such operations be \emph{unknown} and discovered through multiple trials? Does it even make sense to \emph{repeat} such an experiment, given that each spacetime event can only happen once?

Within a broader effort to understand the role of causal structure in quantum theory \cite{Laskey2007, Leifer2013, cavalcanti2014modifications,fritzbeyond2015, woodlesson2012, Henson2015, pienaar2014graph, chavesinformation2015, ried2015quantum, costa2016, Shrapnel2018causation, Allen2016, Giarmatzi2018, Barrett2019, Pearl2021}, a framework has recently emerged in which causal relations need not be fixed in advance \cite{oreshkov12, oreshkov15, araujo15} and can be discovered through experiments. The framework---often dubbed the ``process matrix formalism''---also includes scenarios where causal relations are genuinely indefinite \cite{chiribella09}, with potential applications to quantum information processing \cite{araujo14, feixquantum2015, Guerin2016, Jia2019, Goswami2020} and fundamental models of quantum gravity \cite{hardy2007towards, Zych2019, Hardy2020, parker2021background}. Furthermore, the special case of causally ordered, multi-time processes is emerging as a powerful tool to tackle temporal correlations in non-Markovian quantum processes \cite{modioperational2012, Milz2017, Pollock2018, Shrapnel2018, giarmatzi2018witnessing, Luchnikov2019}, an increasingly prominent feature in complex quantum devices \cite{Morris2019, Sandia_report}. 

\begin{figure}%
\begin{center}
\includegraphics[width=\columnwidth]{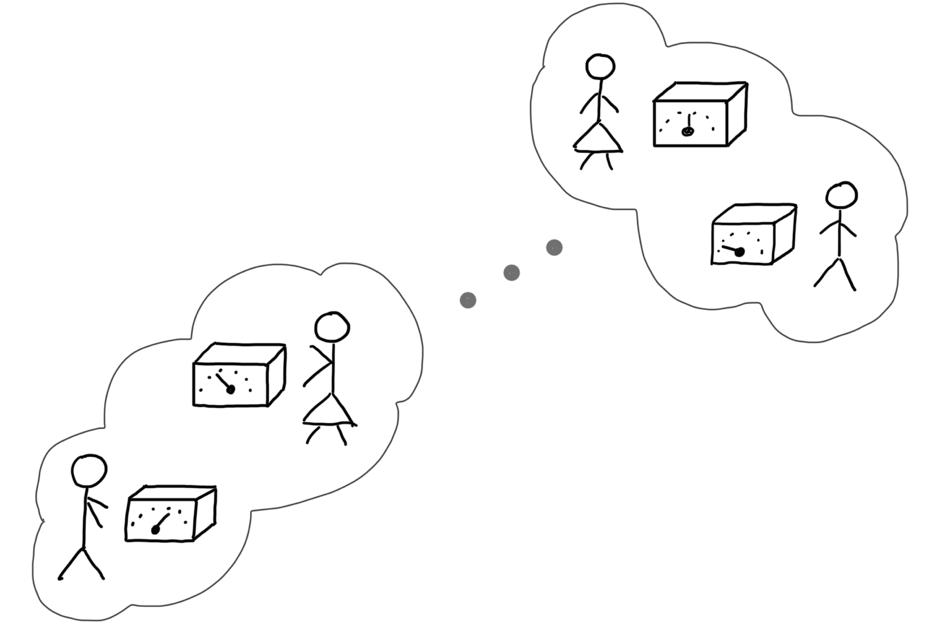}%
\end{center}
\caption{\textbf{Process repeatability}. Repetitions of an experiment are modelled as a one-shot scenario, comprising multiple operations that are only performed once. A de Finetti theorem for processes seeks to group these operations such that they represent independent trials under equivalent conditions, with each trial involving multiple operations in arbitrary, possibly unknown or indefinite, causal relations.}%
\label{processrepetitions}%
\end{figure}

As for ordinary quantum theory, the process matrix formalism makes probabilistic predictions, tacitly assuming that the processes it describes can be repeated arbitrarily many times. Indeed, the paradigm has been employed in several experiments probing processes with definite \cite{White2020, Goswami2021, White2022, Xiang2021, giarmatzi2023multitime} and indefinite \cite{Procopio2015, Rubinoe1602589, rubino2017experimental, Goswami2018, guo2018experimental, goswami2018communicating, Wei2019, taddei2020experimental} causal structure. However, a foundational justification for repeatability is missing. In fact, without additional assumptions, even the quantum de Finetti theorem for states does not apply to typical `repeated state preparation' scenarios: If the repetitions are temporally separated, they cannot be modelled as a joint state, as this would constitute a ``state over time'' \cite{Horsman2017}. This undermines the applicability of standard statistical analysis and characterisation techniques in the most common quantum experiments. 

The goal of this work is to extend the link between exchangeability and repeatability to quantum processes with arbitrary causal structure. To this end, multiple putative trials of an experiment are modelled as a single process, where all measurements and operations are performed only once, Fig.~\ref{processrepetitions}. A global process matrix encapsulates prior knowledge about the entire setup, and the task is to establish that an assumption of exchangeability enables the recovery of a decomposition into several repetitions of `the same' process. This requires extending the de Finetti theorem to quantum processes with general causal structures. At first glance, it may seem that any additional causal constraint (such as global causal order or other no-signalling assumptions) necessitates its own de Finetti theorem. This is because, as made clear through the process-state duality, each such assumption is equivalent to a different set of linear constraints on quantum states, which, a priori, need not be preserved by the de Finetti representation. 

Here we solve the above hurdle by proving a new generalised de Finetti theorem for exchangeable states subject to a broad class of linear constraints. We then specialise the general theorem to show that an exchangeable process subject to a set of no-signalling constraints has a unique representation as a mixture of i.i.d.\ processes subject to the same constraints. We further consider possible extensions of the result and find that, remarkably, even mild generalisations of the theorem's hypothesis do not lead to corresponding de Finetti representations. 

\section{The process matrix formalism}

The process matrix formalism \cite{oreshkov12, oreshkov15, araujo15} characterises the most general scenario where a quantum system, or multiple quantum systems, can be probed an arbitrary number of times, without prior assumptions regarding spatiotemporal or causal relations between different operations. It relates closely to other frameworks, such as the general boundary formalism \cite{Oeckl2003318}, higher-order quantum transformations \cite{chiribellatransforming2008, Perinotti2017,Bisio2019}, multi-time states \cite{Aharonov2009, Silva2014, Silva2017}, entangled histories \cite{Cotler2016} , and superdensity operators \cite{Cotler2017}.

 We call a \emph{site} the abstract location of an operation and we label sites as $A, B, \dots$ Concretely, these labels can be understood as spacetime coordinates at which the operations take place, or more general ways to identify the abstract location of the site, which may be delocalised in space and time.
The most general operation performed at a site $A$, yielding some measurement outcome $a$, is a completely positive (CP) and trace non-increasing map \cite{Heinosaari2011} $\mathcal{M}_a:\mathfrak{L}(\mathcal{H}^{A_I})\rightarrow \mathfrak{L}(\mathcal{H}^{A_O})$, where $\mathcal{H}^{A_I}$, $\mathcal{H}^{A_O}$ are respectively the input and output Hilbert spaces assigned to site $A$ and $\mathfrak{L}(\mathcal{H})$ denotes the set of linear operators on $\mathcal{H}$. We will always use (a version of) the Choi-Jamio{\l}kowski (CJ) isomorphism to represent CP maps \cite{jamio72,Choi1975}: $\mathcal{M}_a\mapsto M_a \in 
\mathfrak{L}(\mathcal{H}^{A})$,
\begin{equation}\label{Choi}
M_a\coloneqq \left[\sum_{jk} \ketbra{j}{k}\otimes \mathcal{M}_a(\ketbra{j}{k}) \right]^T,
\end{equation}
for a chosen orthonormal basis $\{\ket{j}\}_j$ of $\mathcal{H}^{A_I}$, where $^T$ denotes transposition in that basis and we use the short-hand $\mathcal{H}^{A} \equiv \mathcal{H}^{A_I}\otimes \mathcal{H}^{A_O}$. The CP condition translates to the CJ operator being positive semidefinite, $M^{A}_a\geq 0$. For simplicity, we will nominally identify CP maps with their CJ representations.

\begin{figure}[t!]%
\begin{center}
       (a) \includegraphics[width=0.4\columnwidth]{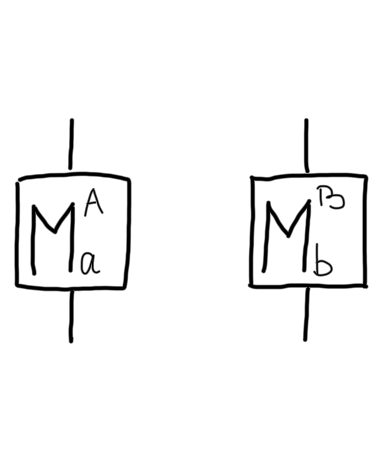} \quad (b) \includegraphics[width=0.38\columnwidth]{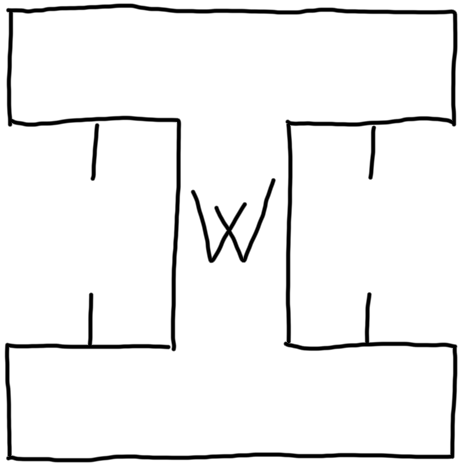} \\
			(c) \includegraphics[width=0.45\columnwidth]{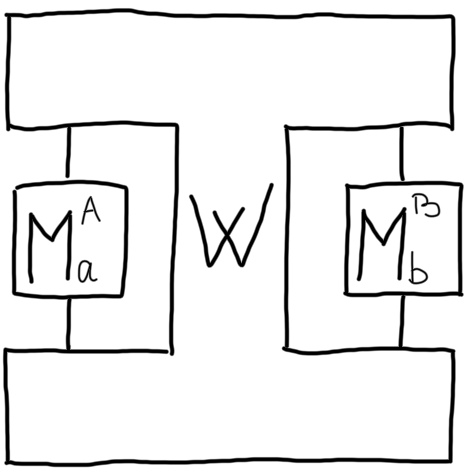}
\caption{\textbf{Operations and processes} (\textbf{a}) Quantum operations, represented as CJ operators $M^A_a$, $M^B_b$, transform an input to an output quantum system---here depicted as wires. The labels $A$, $B$ function as generalised coordinates, identifying the operations without necessarily referring to a background causal structure, while $a$, $b$ denote measurement outcomes. (\textbf{b}) A process matrix $W$ represents the most general way to connect operations. (\textbf{c}) Inserting operations into a process, with no open wires left, returns the probability for observing outcomes $a,b$ through the Born rule for processes, Eq.~\eqref{processborn_Npartite}. 
\label{processmatrix}}%
\end{center}%
\end{figure}

A deterministic operation $\widebar{M}$---one with a single measurement outcome that happens with probability one---is represented by a CP and trace preserving (CPTP) map, which in CJ form translates to the condition 
\begin{equation}
\label{CPTP}
\tr_{A_O}\widebar{M} = \id^{A_I},
\end{equation}
where $\id^X$ denotes the identity operator on $\mathcal{H}^X$ (we may skip the superscripts if the context is sufficiently clear). An \emph{instrument} is a collection of CP maps, $\left\{M_a\right\}_a$, $M_a\geq 0$, that sums to a CPTP map, $\tr_{A_O} \sum_a M_a= \id^{A_I}.$   
 Note that, even when an operation can be regarded as a transformation of a single system, we treat input and output as distinct (although possibly isomorphic) spaces. Furthermore, we assign different Hilbert spaces to different sites, even though they might represent the same system at different times. The space of operations across all sites spans the tensor product of all input and output spaces.

The probability to obtain a set of outcomes $a,b,\dots$ at sites $A,B, \dots$  
is given by a generalisation of the Born rule:
\begin{equation}\label{processborn_Npartite}
    P(a, b, \ldots) = \tr \left[\left(M^{A}_{a}\otimes M^{B}_{b}\cdots\right) W\right],
\end{equation} 
where the operator $W \in \mathfrak{L}(\mathcal{H}^{A}\otimes\mathcal{H}^{B}\otimes \cdots)$ is the \emph{process matrix}, Fig.~\ref{processmatrix},
 which satisfies 
\begin{align} \label{positivity}
W & \geq 0 , \\ \label{normalisation}
\tr & \left[\left(\widebar{M}^{A}\otimes \widebar{M}^{B}\cdots\right) W \right] = 1
\end{align}
for all CPTP maps $\widebar{M}^{A}, \widebar{M}^{B}, \dots$
These constraints can be derived on abstract grounds by assuming that quantum theory is valid at each site---which can be formalised, e.g., through a non-contextuality assumption \cite{Shrapnel2017}---and imposing positivity and normalisation of probabilities. They hold in particular for all ordinary scenarios in quantum mechanics, where all sites can be ordered according to a background time.

The normalisation constraint \eqref{normalisation} is equivalent to a set of linear-affine constraints:
\begin{align} \label{affine}
\tr W &= d^O, \\
L (W) &= 0,  
\end{align}
where $d^O = d^{A_O}d^{B_O}\dots$ is the product of all output-space dimensions and $L{\,:\,}{\mathfrak{L}(\mathcal{H}^{A}\otimes\mathcal{H}^{B}\otimes \cdots)}{\rightarrow}{\mathfrak{L}(\mathcal{H}^{A}\otimes\mathcal{H}^{B}\otimes \cdots)}\hspace{-0.4cm}$ is  a linear function whose particular form is not relevant here, see, e.g., Appendix B in Ref.~\cite{araujo15}.

Additional causal assumptions between sites---specifically, no signalling assumptions---can be enforced by adding linear constraints to the process matrix. For example, imposing no signalling between the sites $A$, $B$ of a bipartite process is equivalent to \cite{araujo15}
\begin{equation}
W = \left(Tr_{A_OB_O} W\right) \otimes \id^{A_OB_O},
\label{nosignalling}
\end{equation}
where (here and in similar expressions) a re-ordering of tensor factors is implied. 
Similarly, one-way no signalling from $B$ to $A$ corresponds to the linear constraints
\begin{align} \label{nosignallingAtoB1} 
W &= \left(Tr_{B_O} W \right) \otimes \id^{B_O}, \\ \label{nosignallingAtoB2} 
Tr_{B_IB_O} W &= \left(Tr_{A_OB_IB_O} W \right) \otimes \id^{A_O}.
\end{align}
Similar constraints characterise one-way no signalling for an arbitrary sequence of causally ordered sites $A, B, C,\dots$ Causally ordered processes, also known as quantum channels with memory \cite{Kretschmann2005}, quantum strategies \cite{gutoski06}, and quantum combs \cite{chiribella08, chiribella09b, Bisio2011}, can always be realised as a sequence of channels connecting the sites, possibly with the addition of an auxiliary system (an environment), thus modelling the most general non-Markovian open-system dynamics \cite{modioperational2012, Milz2017, Pollock2018}. More complex assumptions, such as an arbitrary partial-order relation between sites, can be obtained by combining non-signalling constraints \cite{costa2016, Giarmatzi2018}.

In the following, we will need the \emph{reduced process matrix} obtained by removing one or more sites from a larger set \cite{araujo15}. Unlike for states, the reduced process matrix is not always unique: if a site $B$ has causal influence on (i.e., it can signal to) a site $A$, then, by definition of signalling, the reduced process for $A$ can depend on the operation performed at $B$:
\begin{equation}
W^{A}_{\widebar{M}^{B}}:= \tr_B \left[\left(\id^{A}\otimes \widebar{M}^{B}\right) W^{AB}\right],
\label{reduced}
\end{equation}
where $\widebar{M}^B=\sum_b M^B_b$ is the CPTP map representing the unconditional transformation at $B$ if we ignore the outcomes of the instrument $\{M^B_b\}_b$.

\section{Exchangeable processes}
 
As in ordinary quantum mechanics, the process matrix formalism makes probabilistic predictions, through the Born rule in Eq.~\eqref{processborn_Npartite}. In order for such predictions to be experimentally meaningful, one typically assumes that an experiment involving a number of sites $A, B, \dots$ can be repeated an arbitrary number $n$ of times, where each trial should be modelled by the same process $W$. In a consistent probabilistic framework---and in particular, in a Bayesian perspective---it should be possible to combine all trials into a single experiment, and then specify the assumptions that entitle us to view the total experiment as repetitions of individual trials under equivalent conditions.

\begin{figure}[t!]%
\begin{center}
       (a) \includegraphics[width=0.37\columnwidth]{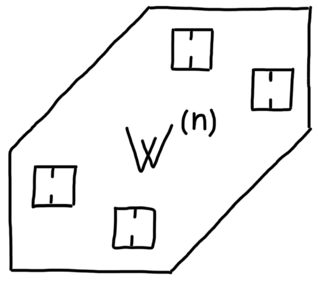} \quad (b) \includegraphics[width=0.41\columnwidth]{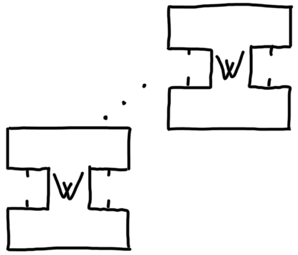} 
\caption{\textbf{Exchangeable and i.i.d.\ processes} (\textbf{a}) A priori, $n$ repetitions of an experiment constitute a one-shot process $W^\n$. (\textbf{b}) The process can be interpreted as $n$ i.i.d.\ trials if it is a product of identical processes, $W^\n=W^\on$. Our result guarantees that exchangeable processes are always mixtures of i.i.d.\ processes. \label{repeatedprocessmatrix}}%
\end{center}%
\end{figure}

With this in mind, we will use the labels $A, B, \dots$ to denote the sites in a single, unspecified trial, while we will use indices $A^1,\dots,A^n, B^1,\dots,B^n,\dots$ to distinguish the different trials. A priori, an $n$-trial scenario with $m$ sites per trial is described by a single one-shot scenario with $N=n\,m$ sites, with a process matrix 
$$W^\n \in \mathfrak{L}(\mathcal{H}_\T^\on),$$
Fig.~\ref{repeatedprocessmatrix}, where the single-trial Hilbert space $\mathcal{H}_\T$ is the tensor product of the input and output spaces associated with all $m$ single-trial sites: 
\begin{equation*}
\mathcal{H}_\T:= \mathcal{H}^{A}\otimes\mathcal{H}^{B}\otimes\cdots
\end{equation*}
In the following, the decomposition of an individual trial into sites $A, B, \dots$ will mostly be irrelevant, so we will effectively treat a single trial as a single site. In line with this, we introduce the short-hand notation $M^j\equiv M^{A^j}\otimes M^{B^j}\otimes\cdots$ to denote  a collection of CP maps for trial $j$, and simply refer to it as \emph{a} CP map. With this convention, conditions \eqref{positivity}, \eqref{normalisation} for a valid $n$-trial process matrix can be written as\footnote{Here and for most of this work, we depart from the common notation in the process matrix literature and drop the superscripts to denote tensor factors. Instead, unless otherwise specified, we assume that tensor factors are always in a reference order, with trials in increasing order from left to right.}
\begin{align} \label{npositivity}
W&^{(n)} \geq 0 , \\ \label{nnormalisation}
\tr & \left[\left(\widebar{M}^{1}\otimes \cdots \otimes \widebar{M}^{n} \right) W \right] = 1
\end{align}
for all CPTP maps $\widebar{M}^{1},  \cdots , \widebar{M}^{n}$. 


As in other de Finetti theorems, we need to capture the idea that different trials are equivalent to each other and that, in principle, there is no bound to the number of trials. The first condition is simply modelled by requiring invariance under relabelling of trials. More specifically, given an $n$-trial process matrix $W^{(n)}$, let us denote by $\ket{\mu}\in \mathcal{H}_\T$ the states in a chosen orthogonal basis of the single-trial space. We define the action of an $n$-element permutation $\sigma\in S_n$ as
\begin{equation}
U(\sigma):= \sum_{\mu_1\dots \mu_n} \ket{\mu_{1}\dots \mu_{n}}\bra{ \mu_{\sigma(1)}\dots \mu_{\sigma(n)}}.
\label{permutations}
\end{equation}
This means that all single-trial sites are permuted together: $A^{j}\mapsto A^{\sigma(j)}$, $B^{j}\mapsto B^{\sigma(j)}, \dots$ with the same permutation $\sigma$. 
The permuted process matrix is then 
\begin{equation}
W^{(n)}_{\sigma} := U(\sigma) W^{(n)} U^{\dagger}(\sigma),
\end{equation}
and we require that it should be equal to the initial process matrix.

The second condition, \emph{extendibility}, means that an $n$-trial scenario can be obtained by `ignoring' the last trial in an $n+1$-trial scenario: $\rho^{(n)}=\tr_{n+1}\rho^{(n+1)}$ for states. For process matrices, there are potentially different candidate definitions, because of the non-uniqueness of the reduced process mentioned above. A simple choice, which is sufficient to deduce a de Finetti representation, is to require that no signalling is possible from the sites in the $n+1$ trial to all the sites from trial $1$ to $n$. With this choice, we have the following definition:

\begin{definition}\label{processexchangeability}
A sequence of process matrices $W^{(n)}$, $n\geq 1$, is called \emph{exchangeable} if it satisfies
\begin{enumerate}
	\item \textbf{Symmetry:} $W^{(n)}$ is invariant under permutations of trials:
	\begin{equation}
	W^{(n)}_{\sigma} = W^{(n)} \qquad \forall\, \sigma\in S_n;
	\label{symmetric}
	\end{equation}
	\item \textbf{Process extendibility:} For every $n$ and for every CPTP map $\widebar{M}$,
	\begin{equation}
	\! W^{(n)} = \tr_{n+1}\left[W^{(n+1)}(\id^{\on} \otimes \widebar{M})\right],
	\label{pmextendibility}
	\end{equation}
	where the partial trace is over the $n+1$ factor in $\mathcal{H}_\T^{\otimes (n+1)}$.
\end{enumerate}
We say that an $n$-trial process matrix is \emph{exchangeable} if it is part of an exchangeable sequence.
\end{definition}
In the definition of extendibility, no-signalling is enforced by requiring that we obtain the same $W^{(n)}$ for any CPTP map $\widebar{M}$ in the last trial. Together with the symmetry assumption, this implies that no signalling is possible between any (sets of) trials. We explore the consequences of different extendibility conditions in Appendix \ref{otherextendibility}.

\section{Constrained states and processes}
Since process matrices are positive semidefinite and have fixed trace, they are always proportional to density operators, up to a constant. This implies that it is possible to treat processes and states in a unified way, which also allows us to leverage known results, such as the quantum de Finetti theorem for states. To this end, it is useful to work with re-normalised process matrices
\begin{equation}
\rho:=W/\tr W = W/d_O.
\label{processtostate}
\end{equation}

The process matrix normalisation constraints, Eq.~\eqref{nnormalisation}, are strictly stronger than state normalisation, $\tr\rho=1$. Therefore, the mapping \eqref{processtostate} identifies processes with states that are subject to additional linear constraints. 
In view of identifying a broader class of constraints, a natural generalisation of Eq.~\eqref{nnormalisation} is the following:
\begin{definition}
Given a set of single-trial operators $\mathfrak{R}\subset \mathfrak{L}\left(\mathcal{H}_\T\right)$, and a function $r:\mathfrak{R}\rightarrow \mathbb{C}$, we say that an $n$-trial operator $\rho^{(n)}$ satisfies a \emph{product expectation constraint} if
\begin{equation}
\tr\left[\left(\bigotimes_{j=1}^n  R_{j} \right)\rho^{(n)}\right] = \prod_{j=1}^n r(R_{j})
	\label{compatible}
	\end{equation}
for all choices of $R_{1},\dots,R_{n} \in \mathfrak{R}$.
\end{definition}

The process normalisation constraints, Eq.~\eqref{nnormalisation}, correspond to 
\begin{multline} \label{CPTPchoi}
\mathfrak{R} = \left\{R=R^A\otimes R^B\otimes\dots \in \mathfrak{L}\left(\mathcal{H}_\T\right) \right. \\
\textup{s.t. }\left. \tr_{X_O} R^X = d^{X_O} \id^{X_I},\; X=A,B,\dots  \right\}
\end{multline}
and $r(R)=1$, where the conditions in Eq.~\eqref{CPTPchoi} are the rescaled CJ form of the trace-preserving condition, Eq.~\eqref{CPTP}. More generally, constraints of the form \eqref{compatible} can be interpreted as the assumption that the expectation value of tensor products of certain observables is equal to the product of the expectation values. See Sec.~\ref{frequentist} below for a further discussion.
 
In order to simplify the proof of our main result, and make it directly applicable to a variety of scenarios, it is useful to re-write, and further generalise, constraints of the form \eqref{compatible}. We present here the key steps and leave proof details to Appendix \ref{lemmaproofs}.

The first observation is that the affine component of the constraints---i.e., the term on the right-hand side of Eq.~\eqref{compatible}---can be absorbed in the state constraint $\tr\rho^{(n)} = 1$ (which we always assume), while maintaining the product structure of the constraint. This can be stated as follows:
\begin{lemma}\label{unbiasedconstraintlemma}
A sequence of exchangeable states $\rho^{(n)}$ satisfies product expectation constraints of the form \eqref{compatible}, for a set of operator $\mathfrak{R}\subset \mathfrak{L}\left(\mathcal{H}_\T\right)$ and a function $r:\mathfrak{R}\rightarrow \mathbb{C}$, if and only
\begin{equation}
\tr[\left(\bigotimes_{j=1}^n  \sigma_{j} \right)  \rho^{(n)}] = 0
\label{linearcompatible}
\end{equation}
for all 
\begin{equation}
\sigma_j \coloneqq R_j - r(R_j), 
\label{Rtosigma}
\end{equation} 
$R_j\in\mathfrak{R}$, $j=1,\dots,n$.
\end{lemma}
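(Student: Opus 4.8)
The plan is to regard both conditions as multilinear functionals of $R_1,\dots,R_n$ and to move between them through the invertible substitution $R_j = \sigma_j + r(R_j)\,\id$ underlying \eqref{Rtosigma}. Abbreviating $c_j \coloneqq r(R_j)$ and writing $O[S]$, for $S\subseteq\{1,\dots,n\}$, for the operator acting as the chosen $O_j$ on trial $j\in S$ and as $\id$ on every other trial, I would expand $\bigotimes_{j=1}^n (R_j - c_j\,\id)$ and $\bigotimes_{j=1}^n (\sigma_j + c_j\,\id)$ as sums over such subsets. Since the substitution preserves the product structure, both implications should then reduce to the same expansion followed by an elementary combinatorial cancellation; the sole structural input is that inserting identities on some trials collapses the expectation to one on fewer trials.

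First I would isolate that collapse as the workhorse: for the exchangeable sequence $\rho^{(n)}$ and any operators $O_j$,
\begin{equation}
\tr\bigl[\,O[S]\,\rho^{(n)}\bigr] \;=\; \tr\Bigl[\bigl(\textstyle\bigotimes_{j\in S} O_j\bigr)\,\rho^{(|S|)}\Bigr],
\end{equation}
which I would justify by using permutation symmetry to move the trivial factors to the end and extendibility $\rho^{(k)} = \tr_{k+1}\rho^{(k+1)}$ iterated down to the marginal on the sites in $S$. I expect this to be the only delicate point, precisely because it needs both hypotheses at once: symmetry to make the statement independent of which $|S|$ trials are retained, and extendibility to identify the reduced operator with the canonical lower-order member $\rho^{(|S|)}$ of the sequence.

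Granting the collapse, the forward direction is a direct computation: expanding $\bigotimes_j(R_j - c_j\,\id)$, collapsing each term, and applying the product constraint \eqref{compatible} at level $|S|$ turns the left-hand side of \eqref{linearcompatible} into $\sum_{S}\prod_{j\in S} c_j\prod_{j\notin S}(-c_j) = \prod_{j=1}^n (c_j - c_j) = 0$. For the converse I would run the same expansion on $\bigotimes_j(\sigma_j + c_j\,\id)$: every subset with $S\neq\emptyset$ collapses to a lower-order instance of the vanishing constraint \eqref{linearcompatible} and drops out, while the single term $S=\emptyset$ contributes $\prod_{j=1}^n c_j$ via $\tr\rho^{(n)} = 1$, recovering \eqref{compatible}. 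The only thing to keep track of here is that the converse uses the $\sigma$-constraint at all levels $1\le |S|\le n$ --- hence the full exchangeable sequence --- together with state normalisation for the empty-subset term.
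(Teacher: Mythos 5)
Your proposal is correct and follows essentially the same route as the paper's proof: both expand the tensor product of the shifted operators $\sigma_j = R_j - r(R_j)\,\id$ (respectively of $R_j = \sigma_j + r(R_j)\,\id$) over the $2^n$ subsets of trials, use exchangeability to collapse the identity factors onto the lower-order marginals $\rho^{(|S|)}$, and apply the corresponding lower-order constraints. The only cosmetic differences are that you justify the collapse directly from symmetry and extendibility where the paper invokes the de Finetti representation, and that you replace the paper's induction in the converse direction by a single direct expansion; neither changes the substance of the argument.
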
 

Next, we observe that a set of linear constraints, such as Eq.~\eqref{linearcompatible}, can always be rewritten as a single linear constraint defined by an appropriate vector-valued function. Explicitly, we prove in Appendix \ref{linearlemmaappendix} that   
\begin{lemma} \label{linearlemma}
Given two inner product spaces $V^1$, $V^2$, a measurable space $Y$, and a set of linear functions $L_y: V^1\rightarrow V^2$, $y\in Y$,
\begin{equation} \label{almostforall}
L_y(v)=0 \quad \text{\emph{a.e. for }} y\in Y
\end{equation}
for every $v\in V^1$ if and only if, for any strictly positive measure $q$ over $Y$ ($q(y)>0$ a.e.)
\begin{equation}
\int_Y dy \, q(y)L^{\dag}_y\circ L_y(v)=0,
\end{equation}
where ``a.e.'' stands for ``almost everywhere'', meaning that the property in question may not hold in at most a measure-zero set.
\end{lemma}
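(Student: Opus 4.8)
The whole lemma turns on a single algebraic identity: for any $y$ and any $v\in V^1$, $\langle v, L_y^{\dag}\circ L_y(v)\rangle = \langle L_y(v), L_y(v)\rangle = \|L_y(v)\|^2 \geq 0$, which converts the composite operator $L_y^{\dag}\circ L_y$ into a nonnegative scalar once it is paired with $v$. The plan is to prove the two implications separately, keeping $v$ fixed throughout. The forward direction is immediate: if $L_y(v)=0$ for almost every $y$, then by linearity of the adjoint $L_y^{\dag}\circ L_y(v)=L_y^{\dag}(0)=0$ a.e., so the $V^1$-valued integrand vanishes almost everywhere and the integral is zero, for any weight $q$ whatsoever.

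For the converse I would fix any strictly positive measure $q$, assume the integral vanishes, and pair the vanishing vector $\int_Y dy\, q(y)\, L_y^{\dag}\circ L_y(v)=0$ against $v$ itself. Since $u\mapsto\langle v,u\rangle$ is a continuous linear functional on $V^1$, it passes inside the integral, giving
\[
\int_Y dy\, q(y)\, \langle v, L_y^{\dag}\circ L_y(v)\rangle = \int_Y dy\, q(y)\, \|L_y(v)\|^2 = 0 .
\]
The integrand is now the product of a strictly positive density with a nonnegative (and measurable) function, so its integral being zero forces $q(y)\,\|L_y(v)\|^2=0$ almost everywhere; strict positivity of $q$ then cancels to leave $\|L_y(v)\|^2=0$, i.e.\ $L_y(v)=0$, for almost every $y$. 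Since $v$ was arbitrary, this is exactly the a.e.\ vanishing required.

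I expect the only real work to be technical rather than conceptual. I would need to check that the $V^1$-valued integral is a genuine (Bochner) integral and that $y\mapsto\|L_y(v)\|^2$ is measurable, so that the interchange of $\langle v,\cdot\rangle$ with the integral is legitimate and the final scalar integral is well defined; the remaining ingredient, that a nonnegative measurable function with vanishing integral is zero almost everywhere, is standard. The conceptual heart of the argument---and the reason the strict-positivity hypothesis on $q$ is needed---is precisely this last cancellation, which is what allows a single integrated (``averaged'') constraint to be upgraded back to the full family of pointwise a.e.\ constraints.
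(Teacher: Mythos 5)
Your proof is correct and follows essentially the same route as the paper's: the forward direction by composition, and the converse by pairing the vanishing integral with $v$ to obtain $\int_Y dy\, q(y)\,\|L_y(v)\|^2=0$ and invoking non-negativity of the integrand together with strict positivity of $q$. The additional remarks on measurability and the Bochner integral are sensible technical caveats that the paper leaves implicit.
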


As an example of Lemma \ref{linearlemma} in action, consider a linear constraint as in Eq.~\eqref{linearcompatible}, for $n=1$. Defining $L_R(\rho) \coloneqq \tr \sigma \rho$, where the dependence on $R$ is given by Eq.~\eqref{Rtosigma}, the constraint is defined by a (possibly infinite) set of equations: $L_R(\rho)= 0$ $\forall R\in \mathfrak{R}$. However, thanks to Lemma \ref{linearlemma}, these constraints are equivalent to the single (operator valued) equation
\begin{multline}
\int_{\mathfrak{R}} dR \, L^{\dag}_R\circ L_R(\rho) \\
 = \int_{\mathfrak{R}} dR \; \sigma \tr \left[\sigma \rho\right] = 0.
\label{linearproductsigma}
\end{multline}

Using the same notation, the constraint in Eq.~\eqref{linearcompatible} for $n>1$ is given by a set of equations  $L_{R_1}\otimes\dots\otimes L_{R_n}(\rho^\n)= 0$, for all combinations of $R_1,\dots,R_n \in \mathfrak{R}$. In particular, this implies the constraints $(L_R)^\on(\rho^\n)=0$, with the same $R\in \mathfrak{R}$ for every trial. Using again Lemma \ref{linearlemma}, this can be expressed with a single equation:
\begin{equation}
 \int_{\mathfrak{R}} dR \; (L^{\dag}_R\circ L_R)^\on(\rho^\n) = 0 .
\end{equation}
Even though, for generic states, this is a strictly weaker condition than Eq.~\eqref{linearcompatible}, it will turn out to be sufficient to prove our constrained de Finetti theorem. This motivates us to introduce the following class of \emph{de Finetti-type} constraints:

\begin{definition}
Let $\mathfrak{H}(\mathcal{H})$ denote the real vector space of self-adjoint operators on a Hilbert space $\mathcal{H}$. Given a set of real vector spaces $\left\{V^k\right\}_k$, a measurable space $Y$, a set of linear functions $L^k_y:\mathfrak{H}(\mathcal{H}_\T)\rightarrow V^k$ defined for all $y\in Y$, and a set of strictly positive measures $q^k$ over $Y$, we say that an $n$-trial state $\rho^{(n)}$ satisfies \emph{de Finetti-type} constraints if
\begin{equation}
\int_Y dy \, q^k(y) \left(L^k_y \right)^{\otimes n}(\rho^{(n)}) =0 \quad \forall k.
\label{exchangeableconstraint}
\end{equation} 
\end{definition}

Note that, using Lemma \ref{linearlemma}, a set of constraints of the form \eqref{exchangeableconstraint} can always be re-written as a single constraint (eliminating the dependence on $k$). However, for later purposes it is useful to refer to the general form \eqref{exchangeableconstraint}.

\section{Constrained quantum de Finetti theorem}
\label{maintheoremsection}

In order to give a unified proof of a constrained de Finetti theorem for processes and states, let us first note that Definition \ref{processexchangeability} of process exchangeability implies state exchangeability:
\begin{lemma}
If, for $n\geq 1$, $W^{(n)}$ define a sequence of exchangeable process matrices,  then the normalised operators $\rho^{(n)}:=W^{(n)}/\tr W^{(n)}$ define a sequence of exchangeable states, i.e., they satisfy the conditions
\begin{enumerate}
	\item \textbf{Symmetry}. $\rho^{(n)}$ is symmetric under permutations
	\item \textbf{State extendibility.} For every $n$, 
	\begin{equation}
	\rho^{(n)} = \tr_{n+1}\rho^{(n+1)}.
	\label{stateextendibility}
	\end{equation}
\end{enumerate}
\end{lemma}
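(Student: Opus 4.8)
The plan is to compute the normalisation constant $\tr W^\n$ explicitly and then transport each process property to the corresponding state property through the rescaling $\rho^\n = W^\n/\tr W^\n$. The only substantive observation is that the ordinary partial trace appearing in state extendibility, Eq.~\eqref{stateextendibility}, is a special case of the process extendibility condition, Eq.~\eqref{pmextendibility}, obtained by inserting one particular CPTP map into the open slot.

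First I would fix the trace. The completely depolarising channel at a site $X$, $\mathcal{M}(\cdot)=\tr(\cdot)\,\id/d^{X_O}$, has CJ operator $\id^X/d^{X_O}$, and taking its product over all sites of a trial gives a deterministic operation $\widebar{M}=\id_\T/d^O$, where $d^O$ is the product of all single-trial output dimensions. Substituting $\widebar{M}^{\,j}=\id_\T/d^O$ for every $j$ into the normalisation condition \eqref{nnormalisation} yields $\tr W^\n/(d^O)^n=1$, so $\tr W^\n=(d^O)^n$ and $\rho^\n=W^\n/(d^O)^n$.

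Symmetry is then immediate. Since each $U(\sigma)$ is unitary, the trace is invariant under the conjugation defining $W^\n_\sigma$, so dividing \eqref{symmetric} by $(d^O)^n$ gives $\rho^\n_\sigma=\rho^\n$. For state extendibility I would apply \eqref{pmextendibility} with the single choice $\widebar{M}^{\,n+1}=\id_\T/d^O$ identified above. Then $\id^{\on}\otimes\widebar{M}^{\,n+1}=\id^{\otimes(n+1)}/d^O$, so the right-hand side of \eqref{pmextendibility} collapses to $\tr_{n+1}W^{(n+1)}/d^O$, giving $\tr_{n+1}W^{(n+1)}=d^O\,W^\n$. Dividing by $(d^O)^{n+1}$ produces $\tr_{n+1}\rho^{(n+1)}=W^\n/(d^O)^n=\rho^\n$, which is \eqref{stateextendibility}.

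I expect the main (and only) obstacle to be the identification in the previous step: recognising that feeding the maximally-mixing channel into the last trial reduces the process-extendibility expression to an ordinary partial trace. Once this is seen, the powers of $d^O$ cancel cleanly and both state properties follow from their process counterparts; the rest is dimensional bookkeeping, with the verification that $\id_\T/d^O$ is a legitimate product of CPTP maps being routine.
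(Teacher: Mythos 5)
Your proposal is correct and follows essentially the same route as the paper: the paper's proof consists precisely of observing that $\id/d^O$ is the Choi representation of the maximally depolarising (hence CPTP) channel and substituting it into the process-extendibility condition \eqref{pmextendibility} to recover state extendibility after normalisation. Your version merely makes the trace bookkeeping $\tr W^\n=(d^O)^n$ and the symmetry step explicit, which the paper leaves implicit.
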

To see that this holds, it is sufficient to note that $\widebar{M}=\id/d^O$ is the Choi representation of a CPTP map (the maximally depolarising channel). By substituting this map into the definition of process extendibility, Eq.~\eqref{pmextendibility}, we obtain state extendibility, Eq.~\eqref{stateextendibility}, after normalisation.
 
Note that the result above only works one way: even if we assume that $\rho^{(n)}$ satisfy the linear constraints for processes, state extendibility is a strictly weaker condition than process extendibility, as discussed in Appendix \ref{statevsprocess}. However, the mapping from exchangeable processes to exchangeable states is all we need to invoke the standard quantum de Finetti theorem:
\begin{theorem}[De Finetti theorem for quantum states \cite{Stoermer1969, Caves2002a}]
If, for $n\geq 1$, $\rho^{(n)}$ is a sequence of exchangeable states, then there is a unique probability measure $P$ over the space of single-trial density operators ${\mathfrak{S}}$ ($\rho\geq 0$, $\tr\rho=1$) such that
\begin{equation}
	\rho^{(n)}= \int_{\mathfrak{S}} d\rho P(\rho)   \rho^{\otimes n}.
	\label{productrepresentation}
\end{equation}
\end{theorem}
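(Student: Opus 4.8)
The plan is to reduce the statement to the standard (unconstrained) quantum de Finetti theorem, and—since the result \emph{is} that theorem—to recall the structure of its proof in the finite-dimensional setting relevant here (all $\mathcal{H}_\T$ are finite-dimensional, $d:=\dim\mathcal{H}_\T<\infty$). The two hypotheses, permutation symmetry and state extendibility $\rho^\n=\tr_{n+1}\rho^{(n+1)}$, say precisely that $\{\rho^\n\}_{n\ge1}$ is a consistent tower of permutation-invariant states. The representation \eqref{productrepresentation} asserts that this tower arises as the moments of a single measure $P$ on the compact set $\mathfrak{S}$, so the whole task is to produce $P$ and show it is unique.

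First I would fix $n$ and exploit extendibility all the way up: for every $N\ge n$ there is a permutation-symmetric $\rho^{(N)}$ whose $n$-party marginal is $\rho^\n$. The workhorse is the finite (quantitative) de Finetti bound, which produces a probability measure $P_N$ on $\mathfrak{S}$ with
\begin{equation}
\left\| \rho^\n - \int_{\mathfrak{S}} dP_N(\rho)\,\rho^{\on} \right\|_1 \le \frac{2\,d^{2}\,n}{N}.
\end{equation}
I would obtain $P_N$ in the usual way, either by purifying $\rho^{(N)}$ into a symmetric pure state on a doubled system and applying the pure-state version (the normalised projector onto the symmetric subspace decomposes as an average of $\proj{\phi}^{\otimes N}$ over Haar-random $\ket{\phi}$), or directly via an informationally complete covariant measurement on the discarded $N-n$ systems. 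The explicit constant is immaterial; what matters is that the right-hand side vanishes as $N\to\infty$.

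Next I would take $N\to\infty$. Because $\mathfrak{S}$ is compact in finite dimensions, the Borel probability measures on it are weak-$*$ compact (Banach--Alaoglu/Prokhorov), so $\{P_N\}$ has a subsequence converging weakly to a probability measure $P$. Since $\rho\mapsto\rho^{\on}$ is continuous and bounded on $\mathfrak{S}$, weak convergence passes through the integral and the vanishing error yields $\rho^\n=\int_{\mathfrak{S}} dP(\rho)\,\rho^{\on}$. Running this simultaneously for all $n$ (by a diagonal argument, or by noting that the limit is forced to be consistent across $n$) gives \eqref{productrepresentation} for the whole tower with a single $P$.

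Finally, uniqueness: the family $\{\rho^\n\}_{n\ge1}$ fixes all polynomial moments $\int_{\mathfrak{S}} dP(\rho)\,\prod_{j}\tr[X_j\rho]=\int_{\mathfrak{S}} dP(\rho)\,\tr[(X_1\otimes\cdots\otimes X_n)\rho^{\on}]$ of $P$, and by Stone--Weierstrass these polynomials in the entries of $\rho$ are dense in $C(\mathfrak{S})$, so any two measures reproducing \eqref{productrepresentation} coincide. The main obstacle is the quantitative finite-$N$ estimate: establishing the approximation bound with a constant uniform in the state and decaying in $N$ is the one genuinely technical ingredient, whereas the limiting and uniqueness steps are soft consequences of compactness and density. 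An alternative route avoids the bound entirely—regard the tower as a single permutation-invariant state on the quasi-local algebra $\bigotimes_{j=1}^\infty\mathfrak{L}(\mathcal{H}_\T)$, whose symmetric states form a Choquet simplex with extreme points the product states $\rho^{\otimes\infty}$ \cite{Stoermer1969}, and read off \eqref{productrepresentation} as the unique barycentric decomposition—at the cost of shifting the obstacle to St{\o}rmer's characterisation of the extreme points.
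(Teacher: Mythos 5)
The paper does not actually prove this theorem: it is imported as a known result, with the proof deferred to the cited references \cite{Stoermer1969, Caves2002a}, and the paper's own technical work begins only afterwards (restricting the support of $P$ via the constraints). So there is no in-paper argument to compare against, and the relevant question is simply whether your sketch is a sound proof of the cited result. It is. Your main route---extend to $N$ parties, purify the permutation-invariant $\rho^{(N)}$ into the symmetric subspace of a doubled system, apply the finite de Finetti bound, pass to a weak-$*$ limit on the compact set $\mathfrak{S}$, and get uniqueness from Stone--Weierstrass applied to the moment functionals $\rho\mapsto\prod_j\tr[X_j\rho]$---is the standard modern proof in finite dimensions, and all the genuinely delicate ingredients (the purification-into-the-symmetric-subspace lemma needed to handle mixed permutation-invariant states, the vanishing $O(n/N)$ error, the separation of points needed for density) are correctly identified. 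The alternative you mention at the end is essentially St{\o}rmer's original argument, i.e.\ one of the two references the paper cites; the other cited proof (Caves, Fuchs, and Schack) takes yet a third route, reducing to the \emph{classical} de Finetti theorem by applying an informationally complete POVM to each subsystem and lifting the resulting classical representation back to operators, which avoids the quantitative finite-$N$ bound at the cost of an inversion argument for the IC-POVM. The only point worth tightening in your write-up is the passage from ``a representing measure for each fixed $n$'' to ``a single $P$ for the whole tower'': this is cleanest if you note that the construction of $P_N$ can be made independent of $n$ (so one subsequential limit serves all $n$ at once), after which your moment-uniqueness argument also shows all subsequential limits coincide; as stated, ``the limit is forced to be consistent across $n$'' presupposes a measure already known to represent every level.
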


After restoring the normalisation factors, Eq.~\eqref{productrepresentation} \emph{almost} gives us a de Finetti representation for processes. The reason this is not quite the desired result is that the integral extends over the whole space of single-trial density operators, which also includes operators that are not valid processes. In order to interpret Eq.~\eqref{productrepresentation} as a mixture of i.i.d.\ processes, we have to show that the probability measure $P$ has support only on the space of valid single-trial processes, that is, processes that satisfy the constraint \eqref{normalisation}.

As discussed in the previous section, we prove a more general result that applies to all states subject to de Finetti-type constraints, Eq.~\eqref{exchangeableconstraint}. We want to show that a set of states $\rho^{(n)}$ satisfying constraints of the form \eqref{exchangeableconstraint} can be written as a mixture of i.i.d.\ states subject to the same constraint at the single-trial level. Perhaps surprisingly, a much weaker assumption leads to the desired result: it is sufficient to assume that the constraints hold for two trials. Our central result can then be formulated as follows:

\begin{theorem}\label{maintheorem}
For a set of indices $k$, given a set of real vector spaces $V^k$, a measurable space $Y$, a set linear functions $L^k_y:\mathfrak{H}(\mathcal{H}_\T)\rightarrow V^k$ for $y \in Y$, and a set of strictly positive measures $q^k$ over $Y$, a state $\rho^{(n)}$ that is 
\begin{enumerate}
	\item exchangeable,
	\item subject to a set of constraints of the form \eqref{exchangeableconstraint} for $n=2$: 
	\begin{equation}
\int_Y dy \, q^k(y) L^k_y\otimes L^k_y (\rho^{(2)}) =0\quad \forall k
\label{sumandproduct}
\end{equation}
\end{enumerate}
has a unique representation of the form
\begin{equation}
\rho^{(n)} = \int_{L^k_y(\rho)=0\,\mathrm{a.e.} y\in Y, \,\forall \,k }{ d\rho P(\rho) \rho^{\otimes n}}
\label{constraineddefinetti}
\end{equation}
for some probability measure $P(\rho)$, where the notation denotes an integral ranging over single-trial states subject to the constraints $L^k_y(\rho)=0$ $\forall\,k$ and a.e.\ for $y\in Y$.
\end{theorem}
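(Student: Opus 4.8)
The plan is to reduce the statement to the ordinary quantum de Finetti theorem for states and then show that the measure it produces is forced to be supported on the constrained set. Since an exchangeable $\rho^{(n)}$ (hypothesis 1) satisfies the hypotheses of the quantum de Finetti theorem for states stated above, I would first invoke that theorem to obtain a unique probability measure $P$ over the full single-trial state space $\mathfrak{S}$ with
\begin{equation}
\rho^{(n)} = \int_{\mathfrak{S}} d\rho\, P(\rho)\, \rho^{\otimes n}.
\end{equation}
Everything then comes down to proving that $P$ assigns zero measure to states violating $L^k_y(\rho)=0$ a.e.\ in $y$; once that is established, the integral in \eqref{constraineddefinetti} may be restricted to the constrained set, and its uniqueness is inherited verbatim from the uniqueness of $P$.

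To locate the support of $P$, I would substitute the $n=2$ instance $\rho^{(2)}=\int_{\mathfrak{S}} d\rho\, P(\rho)\, \rho\otimes\rho$ into hypothesis \eqref{sumandproduct}. Commuting the (linear, continuous) map $L^k_y\otimes L^k_y$ and the two integrals past one another and using $(L^k_y\otimes L^k_y)(\rho\otimes\rho)=L^k_y(\rho)\otimes L^k_y(\rho)$, the constraint becomes $\int_{\mathfrak{S}} d\rho\, P(\rho)\, F^k(\rho)=0$, where $F^k(\rho):=\int_Y dy\, q^k(y)\, L^k_y(\rho)\otimes L^k_y(\rho)\in V^k\otimes V^k$. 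The key structural observation---and the reason the two-trial constraint alone suffices---is that $F^k(\rho)$ is a \emph{second moment}: a strictly-positively-weighted average of the squares $v\otimes v$, hence a positive element of $V^k\otimes V^k$.

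Equipping each $V^k$ with its inner product (as in Lemma \ref{linearlemma}), I would contract with the continuous functional $g^k$ fixed by $g^k(a\otimes b):=\langle a,b\rangle_{V^k}$, so that $g^k(v\otimes v)=\|v\|^2\geq 0$. Applying $g^k$ to the vanishing integral gives
\begin{multline}
0 = g^k\!\left(\int_{\mathfrak{S}} d\rho\, P(\rho)\, F^k(\rho)\right) \\
= \int_{\mathfrak{S}} d\rho\, P(\rho)\int_Y dy\, q^k(y)\, \|L^k_y(\rho)\|^2 .
\end{multline}
Every factor of the integrand is non-negative, so it must vanish $P$-almost everywhere; since $q^k>0$, this forces $L^k_y(\rho)=0$ for a.e.\ $y$, for $P$-almost every $\rho$. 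Intersecting these full-measure sets over the family of indices $k$ then confines the support of $P$ to $\{\rho : L^k_y(\rho)=0\ \text{a.e.\ }y,\ \forall k\}$, which is precisely \eqref{constraineddefinetti}.

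The hard part will be the measure-theoretic bookkeeping rather than the algebra: justifying the interchange of the $\rho$-integral with the $y$-integral and with the continuous maps $L^k_y\otimes L^k_y$ and $g^k$ (a Bochner/Fubini argument), and, above all, passing from the family of statements ``$L^k_y(\rho)=0$ for a.e.\ $y$, for $P$-a.e.\ $\rho$, for each $k$'' to a single set of full $P$-measure on which all constraints hold simultaneously---this is where I would use countability of the index set $k$, or separability of the $V^k$ to reduce to a countable dense subfamily. Conceptually, the only genuinely new ingredient beyond the unconstrained theorem is the elementary but crucial fact that a vanishing positive-weighted second moment of $L^k_y(\rho)$ forces $L^k_y(\rho)$ itself to vanish.
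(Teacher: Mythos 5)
Your proposal is correct and follows essentially the same route as the paper: substitute the de Finetti form of $\rho^{(2)}$ into the two-trial constraint and observe that a strictly-positively-weighted integral of squares can only vanish if each square vanishes almost everywhere, which confines the support of $P$. The only (cosmetic) difference is that you contract $L^k_y(\rho)\otimes L^k_y(\rho)$ with the inner product to get $\|L^k_y(\rho)\|^2$ in one shot, whereas the paper applies an arbitrary dual vector $\omega^k\otimes\omega^k$ to get $\bigl(\omega^k\circ L^k_y(\rho)\bigr)^2$ and then quantifies over all $\omega^k$; both yield the same conclusion, and your explicit flagging of the Fubini-type interchanges and the countability needed to intersect the full-measure sets over $k$ is a point the paper passes over silently.
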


Interestingly, this theorem can \emph{almost} be derived as the asymptotic limit of a result in Ref.~\cite{Berta2022}, Theorem 2.3, which considers a finite version of a constrained de Finetti theorem. However, the constraints considered in Ref.~\cite{Berta2022} are stronger than in our theorem, as they only apply to one factor:
$\mathcal{I}^{\otimes (n-1)} \otimes L \, (W^\n) = W^{(n-1)}\otimes v$ for some $v$ in the co-domain of $L$  and where $\mathcal{I}$ denotes the identity function. It was left as an open question if a weaker condition, as in our theorem, would still lead to a de Finetti theorem. Our result answers the question in the affirmative for the asymptotic limit of infinite exchangeable sequences.

Let us now prove Theorem \ref{maintheorem}.
\begin{proof}
For each vector space $V^k$, consider an arbitrary dual vector $\omega^k$, that is, an arbitrary linear function $\omega^k:V^k\rightarrow \mathbb{R}$. Applying $\omega^k\otimes\omega^k$ to Eq.~\eqref{sumandproduct}, we have, for every $k$, 
\begin{multline}
\int_Y dy \, q^k(y) \times \\
\times \left(\omega^k \circ L^k_y\right)\otimes \left(\omega^k \circ L^k_y\right) (\rho^{(2)}) =0.
\label{integralproduct}
\end{multline}
Since $\rho^{(2)}$ is exchangeable, we can expand it in the de Finetti form, Eq.~\eqref{productrepresentation}, which substituted into Eq.~\eqref{integralproduct} gives
\begin{equation}
 \!\!\!\int_Y \,dy \,q^k(y) \int_{\mathfrak{S}} d\rho P(\rho) \left(\omega^k \circ L^k_y(\rho) \right)^2 = 0.
\label{expanded}
\end{equation}
By assumption, $P(\rho)\geq 0$, $q^k(y)>0$ a.e., and $\omega^k \circ L^k_j(\rho)\in \mathbb{R}$, so the expression \eqref{expanded} is a positive linear combination of non-negative terms $\left(\omega^k \circ L^k_j(\rho) \right)^2$. This means that the expression can only vanish if
\begin{equation}
\begin{split} &P(\rho) \left(\omega^k \circ L^k_y(\rho) \right)^2 = 0 \\
&\forall k \text{ an a.e.\ for }  \rho \in \mathfrak{S},\,y\in Y.
\end{split}
\label{vanishing}
\end{equation}
This is only possible if, for each $k$ and for almost all $\rho$, $y$, at least one of the two factors vanishes, which means that either 
\begin{itemize}
	\item $P(\rho) = 0$  or
	\item $\omega^k \circ L^k_y(\rho) = 0$.
\end{itemize}
Since this has to hold for every dual vector $\omega^k$, we conclude that $P(\rho)=0$ a.e.\ for all $\rho$ such that $L^k_y(\rho) \neq 0$. Therefore, the integral in the de Finetti representation for $\rho^{(n)}$ can be limited to those states that satisfy $L^k_y(\rho) = 0$ $\forall k$ and a.e.\ for $y\in Y$. This yields the desired result, Eq.~\eqref{constraineddefinetti}.
\end{proof}

We will now discuss some consequences of this result.

\subsection*{General process matrices}
Because of Lemmas  \ref{unbiasedconstraintlemma} and \ref{linearlemma},  Theorem \ref{maintheorem} immediately implies that a sequence of exchangeable states subject to product expectation constraints, Eq.~\eqref{compatible}, for some set $\mathfrak{R}$ of operators  and some function $r$, has a unique de Finetti representation over states subject to the same constraints at the single-trial level:
\begin{equation}
\rho^{(n)} = \int_{\tr R \rho = r(R)\, \forall\, R \in \mathfrak{R}}{ d\rho P(\rho) \rho^{\otimes n}}.
\label{constraineddefinetti2}
\end{equation}
In particular, by choosing $\mathfrak{R}$ as in \eqref{CPTPchoi} (CPTP maps, up to normalisation), $r(R)=1$, and after the appropriate rescaling of operators, we obtain our original goal:
\begin{corollary}
An $n$-trial process matrix $W^{(n)}$ is exchangeable if and only if it has a de Finetti representation 
\begin{equation}
	W^{(n)}=\int_{\mathfrak{W}} dW P(W)  W ^{\otimes n},
	\label{processdeFinetti}
\end{equation}
for a unique probability measure $P(W)$, where $\mathfrak{W}$ denotes the set of single-trial process matrices.
\end{corollary}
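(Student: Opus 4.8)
The plan is to obtain the forward implication as a direct specialisation of the general representation in Eq.~\eqref{constraineddefinetti2}, and to verify the converse by a short direct computation. First I would recall that, by the Lemma relating process and state exchangeability, an exchangeable sequence $W^{(n)}$ induces an exchangeable sequence of normalised states $\rho^{(n)} = W^{(n)}/d^O$. The process normalisation constraint Eq.~\eqref{nnormalisation} is precisely the product expectation constraint Eq.~\eqref{compatible} for the set $\mathfrak{R}$ in Eq.~\eqref{CPTPchoi} with $r(R)=1$, so Lemmas \ref{unbiasedconstraintlemma} and \ref{linearlemma} recast it as a de Finetti-type constraint whose $n=2$ instance, Eq.~\eqref{sumandproduct}, holds in particular. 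Theorem \ref{maintheorem} then yields a unique $P$ supported on single-trial states obeying $\tr[R\rho]=1$ for all $R\in\mathfrak{R}$.

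The one genuinely new step is to identify this constraint set with the valid single-trial processes. I would show that a positive, unit-trace operator $\rho$ satisfies $\tr[R\rho]=1$ for every $R$ of the form in Eq.~\eqref{CPTPchoi} if and only if the rescaled operator $W = d^O \rho$ obeys the process normalisation Eq.~\eqref{normalisation}; this is immediate once one notes that the operators $R$ are exactly the rescaled Choi forms of CPTP maps and that the $d^O$ factors cancel in the rescaling. Hence the support of $P$ is exactly $\mathfrak{W}$, and after restoring the $d^O$ factors Eq.~\eqref{constraineddefinetti2} becomes Eq.~\eqref{processdeFinetti}, with $P(W)$ inherited from, and its uniqueness preserved by, Theorem \ref{maintheorem}.

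For the converse, I would assume a representation of the form Eq.~\eqref{processdeFinetti} over $\mathfrak{W}$ and verify Definition \ref{processexchangeability} directly. Symmetry is manifest since each $W^{\otimes n}$ commutes with $U(\sigma)$. For process extendibility I would use $W^{\otimes(n+1)}(\id^{\on}\otimes\widebar{M}^{\,n+1}) = W^{\otimes n}\otimes(W\widebar{M}^{\,n+1})$ and trace out the last factor, so that $\tr_{n+1}[W^{\otimes(n+1)}(\id^{\on}\otimes\widebar{M}^{\,n+1})] = W^{\otimes n}\,\tr[W\widebar{M}^{\,n+1}] = W^{\otimes n}$, the last equality being the single-trial normalisation Eq.~\eqref{normalisation}; integrating against $P(W)$ recovers $W^{(n)}$ for every CPTP $\widebar{M}^{\,n+1}$. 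Positivity of $W^{(n)}$ is immediate, and its normalisation follows from $\prod_j \tr[\widebar{M}^{j}W]=1$.

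I expect the main subtlety to lie not in any single computation but in the bookkeeping of the rescaling between states and processes: one must check that the constraint $\tr[R\rho]=1$ with $R$ as in Eq.~\eqref{CPTPchoi} translates, after multiplication by $d^O$, into exactly the normalisation Eq.~\eqref{normalisation} for all CPTP maps, with no spurious additional constraints, so that the integration domain is precisely $\mathfrak{W}$---neither too large nor too small. Once this identification is pinned down, the corollary is essentially a transcription of Theorem \ref{maintheorem}, and the uniqueness of $P(W)$ is simply inherited from it.
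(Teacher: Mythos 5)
Your proposal is correct and follows essentially the same route as the paper: specialise Theorem \ref{maintheorem} via Lemmas \ref{unbiasedconstraintlemma} and \ref{linearlemma} to the product expectation constraint with $\mathfrak{R}$ as in Eq.~\eqref{CPTPchoi} and $r(R)=1$, then undo the rescaling $\rho = W/d^O$ to identify the support of $P$ with $\mathfrak{W}$. The only addition is your explicit verification of the converse (symmetry, extendibility, and normalisation of a mixture of i.i.d.\ processes), which the paper leaves implicit but which your computation handles correctly.
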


\subsection*{Processes with no-signalling constraints}

Let us now consider $n$-trial processes subject to specific no-signalling constraints. These can arise if the relative spatiotemporal location of some sets of sites is fixed and known, or from structural properties of the experimental protocol, such as isolation between components of a device. For example, if $B^j$ is in the future of $A^j$ in every trial $j$, we can impose that $W^{(n)}$ permits no signalling from $B^j$ to $A^j$. The goal is to show that we can also restrict the single-trial probability $P$ to have support on processes that satisfy the same no signalling constraint, i.e., that we can interpret our scenario as consisting of i.i.d.\ processes with no signalling from $B$ to $A$.

As shown explicitly in Ref.~\cite{Wechs2019}, no-signalling from a set of parties $\mathcal{K}_2$ to a disjoint set of parties $\mathcal{K}_1$ can be characterised through a linear constraint
\begin{equation}
L^{\mathcal{K}_1\prec \mathcal{K}_2}(W)=0,
\label{multipartnosignalling}
\end{equation}
where we refer to Ref.~\cite{Wechs2019} for the explicit form of the linear function $L^{\mathcal{K}_1\prec \mathcal{K}_2}:\mathfrak{L}(\mathcal{H}_\T)\rightarrow \mathfrak{L}(\mathcal{H}_\T)$.  A given scenario may give rise to multiple no-signalling constraints, corresponding to a set of linear functions $L^k$, with each $k$ labelling no-signalling from a set of sites to another set of sites.

Given an $n$-trial process matrix $W^{(n)}$, imposing a set of no-signalling conditions for each trial results in the set of constraints
\begin{align*}
L^k\otimes\mathcal{I}\cdots&\otimes\mathcal{I} (W^\n) = 0 \quad \forall k, \\
\mathcal{I}\otimes L^k\otimes\cdots&\otimes\mathcal{I} (W^\n) = 0 \quad \forall k, \\
\dots&
\label{singlesiteconstraints}
\end{align*}
These constraints can be combined to give
\begin{equation}
L^k\otimes\cdots\otimes L^k (W^\n) = 0 \quad \forall k.
\label{combinednosignalling}
\end{equation}
This is now a set of de Finetti-type constraints, Eq.~\eqref{exchangeableconstraint}, so we can apply the constrained de Finetti theorem and obtain the following result:
\begin{corollary}
An exchangeable $n$-trial process matrix $W^\n$, subject to single-site no-signalling constraints encoded in a set of linear functions $L^k$, has a unique representation as a mixture of i.i.d.\ single-site process matrices subject to the same no-signalling constraints:
\begin{equation}
	 W^\n=\int_{\subalign{&W\in \mathfrak{W}\\ 
	&L^k(W)=0 \,\forall k}} 	\! dW P(W)  W^{\otimes n}.
	\label{nosignallingdeFinetti}
\end{equation}
\end{corollary}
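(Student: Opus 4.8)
The plan is to read the per-trial no-signalling conditions as de Finetti-type constraints in the sense of \eqref{exchangeableconstraint} and then apply Theorem \ref{maintheorem} once, feeding it these conditions together with the normalisation constraints already used for general processes. The only genuinely new ingredient, relative to the corollary on general process matrices, is that the family of linear functions indexed by $k$ now also contains the no-signalling maps of Ref.~\cite{Wechs2019}.

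First I would pass from the per-trial single-site constraints to the combined constraint \eqref{combinednosignalling}. Since $L^k\otimes\mathcal{I}\otimes\cdots\otimes\mathcal{I}(W^\n)=0$, applying $\mathcal{I}\otimes L^k\otimes\mathcal{I}\otimes\cdots\otimes\mathcal{I}$ to this vanishing quantity gives $L^k\otimes L^k\otimes\mathcal{I}\otimes\cdots\otimes\mathcal{I}(W^\n)=0$, and iterating over the remaining factors yields $(L^k)^{\on}(W^\n)=0$ for every $k$. This is exactly of the form \eqref{exchangeableconstraint}, with $Y$ a single point, $L^k_y\equiv L^k$, and $q^k$ a point mass, so no integration is needed; in particular, at two trials it reads $L^k\otimes L^k(W^{(2)})=0$, the hypothesis \eqref{sumandproduct} of Theorem \ref{maintheorem}. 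I would then take the index set of Theorem \ref{maintheorem} to be the union of two families: the normalisation constraints, which through the set $\mathfrak{R}$ of \eqref{CPTPchoi} and Lemmas \ref{unbiasedconstraintlemma} and \ref{linearlemma} already force the de Finetti measure onto $\mathfrak{W}$, and the no-signalling constraints just obtained. Applying the theorem to this combined family returns a representation $W^\n=\int dW\,P(W)\,W^{\otimes n}$ whose measure is supported on single-trial operators obeying \emph{both} families, i.e.\ on $\{W\in\mathfrak{W}:L^k(W)=0\;\forall k\}$; this is precisely \eqref{nosignallingdeFinetti}, with uniqueness of $P$ inherited from the quantum de Finetti theorem for states.

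The conceptual hurdle this corollary must clear is the worry, flagged in the introduction, that the de Finetti representation of an exchangeable process need not \emph{a priori} respect an extra linear constraint such as no-signalling; Theorem \ref{maintheorem} dissolves this worry by handling arbitrary linear constraints uniformly, so the remaining work is to confirm that the no-signalling maps meet its hypotheses. Two points deserve care. First, each $L^k$ must carry the real space $\mathfrak{H}(\mathcal{H}_\T)$ into a real vector space $V^k$; this holds because the no-signalling maps are real-linear combinations of partial traces and identity-factor insertions, hence Hermiticity-preserving, so the theorem applies verbatim. Second, the two-trial constraint $L^k\otimes L^k(W^{(2)})=0$ must hold, which requires the per-trial no-signalling conditions to be imposed throughout the exchangeable sequence and not only at the single level $n$; this is the natural reading, since the causal structure inside one trial is the same however many trials are grouped together. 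Granting these, the composition argument that builds $(L^k)^{\on}$ from the single-site constraints is immediate, since applying a linear map to a vanishing vector returns zero.
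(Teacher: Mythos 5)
Your proposal follows the paper's own route: combine the per-trial single-site constraints into $(L^k)^{\otimes n}(W^{(n)})=0$, recognise this as a de Finetti-type constraint of the form \eqref{exchangeableconstraint} (with $Y$ trivial), and apply Theorem \ref{maintheorem} jointly with the normalisation constraints that already restrict the measure to $\mathfrak{W}$. The extra details you supply --- the explicit composition argument yielding \eqref{combinednosignalling} and the check that the $L^k$ are Hermiticity-preserving --- are correct and consistent with what the paper leaves implicit.
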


In particular, this result applies to a typical experimental setup where, in each run of the experiment, a sequence of temporally separated measurements is performed. In such a case, we are entitled to impose one-way no-signalling constraints to the global $n$-trial process matrix $W^\n$. If we can also assume exchangeability, then we can write the $n$-trial process matrix as a mixture of i.i.d.\ causally ordered processes, recovering a \emph{de Finetti theorem for combs}, which was proven independently \cite{Puseyprivate}.

\section{Frequentist interpretation of the constraints}
\label{frequentist}

In line with the original de Finetti theorem, we have presented our result within  a Bayesian mind set, where probabilities, states, and processes are representations of an agent's knowledge of (or expectations about) a given scenario, without any a priori frequentist interpretation. The frequentist interpretation emerges for the single-trial probabilities appearing in the de Finetti representation.

However, it can be useful to re-interpret the result in a frequentist approach, as this can give some intuition about when the state constraints we have introduced may arise. For simplicity, we will only consider states, although the discussion directly extends to processes.

In a frequentist view, the $n$-trial state $\rho^\n$, and any probability we associate to it, describes a scenario where the entire set of trials can be repeated an arbitrary number of times. We refer to a set of $n$ trials as a \emph{supertrial}. It is of course possible to recover the frequentist interpretation of $\rho^\n$ by enlarging the Bayesian description to include an unbounded number of supertrials. The fact that each supertrial is represented by the same, `unknown' $\rho^\n$ is then recovered by assuming exchangeability of the sequence of supertrials. However, we will stick to a frequentist language for describing the repetitions of supertrials.

For concreteness, we can imagine a source of quantum systems that is turned on every day. Each day, the source produces $n$ systems, which we collectively describe by some state $\rho^\n$, and on which we can perform arbitrary measurements. After repeating the same measurements for an arbitrary number of days, the frequency of any measurement converges to the probability calculated from $\rho^\n$. If $\rho^\n$ is exchangeable (and hence has a de Finetti representation), it means that, within any given day, the source prepares the same state $\rho$  over and over, but a different $\rho$ is picked randomly each day according to the probability $P(\rho)$, so that the supertrial state is represented by the mixture $\rho^\n = \int_{\mathfrak{S}} P(\rho) \rho^\on$.

The frequentist language makes it easier to formulate a context in which constraints on the expectation values of observables, such as in Eq.~\eqref{compatible}, may arise. Imagine that, each day, we only use the source twice, and each time we measure some chosen observable $R$. This means that we are measuring the product observable $R\otimes R$ over the two-trial state $\rho^{(2)} = \tr_{3\dots n} \rho^\n$.  After repeating the experiment over many days, we record the one-trial and two-trial expectation values
\begin{align} \label{singletrialexpectation}
\langle R \rangle &= \tr\left( R \rho^{(1)} \right), \\
\label{twotrialexpectation}
\langle R\otimes R \rangle &= \tr\left( R\otimes R \rho^{(2)} \right).
\end{align}
Note that, at this point, we do not know what $\rho^\n$ is---we just assume that there must be some $\rho^\n$ that reproduces our observed expectation values according to expressions \eqref{singletrialexpectation} and \eqref{twotrialexpectation}, where $\rho^{(1)} = \tr_{2} \rho^{(2)}$.

In this scenario, we say that we have a product expectation constraint if
\begin{equation} \label{twiceconstrained}
\langle R\otimes R \rangle = \langle R \rangle^2.
\end{equation}
Indeed, this coincides with the form \eqref{compatible} where the set of observables $\mathfrak{R}$ contains a single element $R$, $r(R) = \langle R \rangle$, and $n=2$.

The constrained de Finetti theorem implies that, having observed the relation \eqref{twiceconstrained} (and assuming exchangeability), we can conclude that, at each trial, the source always produces a state $\rho$ with expectation value 
\begin{equation}
 \label{oneexpectation}
\tr \left( R \rho \right) = r(R) = \langle R \rangle.
\end{equation} 
Note that this is not the same as the expectation value \eqref{singletrialexpectation}, which is obtained from one measurement per day repeated over many days, for which we use $\rho^{(1)} = \int_{\mathfrak{S}}d\rho P(\rho) \rho$. Eq.~\eqref{oneexpectation} means that if, over a \emph{single} day, we measure $R$ arbitrarily many times (not just two), the average of all observed results will approach $\langle R \rangle$. More general product expectation constraints arise in analogous scenarios for a larger number of observables.

As a concrete example, if the single-trial system has two levels, and we take the Pauli observable $R=Z$, then the two-trial constraint $\langle Z \otimes Z \rangle = \langle Z \rangle^2 =: \bar{z}^2$  on an exchangeable state $\rho^{(n)}$ implies\footnote{A similarly constrained state was considered in Ref.~\cite{Schack2001}, although there the constraint emerged from arbitrarily many measurements of $Z$ within a single supertrial.}
\begin{multline}
\rho^{(n)} = \int_{x^2+y^2\leq 1-\bar{z}^2} dx dy \, P(x,y) \times \\
\times \frac{1}{2^n}\left(\id + x X + y Y + \bar{z} Z \right)^\on,
\end{multline}
where $X$ and $Y$ are the two remaining Pauli operators. This tells us that the source always prepares a state in the section of the Bloch sphere with $z$ coordinate equal to $\bar{z}$, but the $x$ and $y$ coordinates are picked randomly each day according to the distribution $P(x,y)$.

\section{(Non)-extensions of the result} 

It is natural to ask whether there are other types of constraints, not covered by Theorem \ref{maintheorem}, that still lead to a constrained de Finetti representation. It is in fact quite instructive that this does not work for some seemingly mild generalisations of our constraints. In rather general terms, we can formulate the question as follows:

\emph{Given a sequence of exchangeable states $\rho^\n$ and a sequence of functions $F^\n$ such that, for some values of $n$, $F^\n(\rho^\n)=0$, does $\rho^\n$ have a de Finetti representation with support on states satisfying $F^{(1)}(\rho)=0$? In other words, can we write
\begin{equation}
\rho^{(n)} = \int_{F^{(1)}(\rho)=0}{d\rho \, P(\rho) \rho^{\otimes n}}
\label{generalconstraineddeFinetti}
\end{equation}
for some probability measure $P(\rho)$?} We explore examples of this type in Appendix \ref{appendixconstraints}. Here we only summarise the key findings.

\begin{itemize}
	\item Instead of constraints that fix the expectation values of some observables through equalities, as in Eq.~\eqref{compatible}, we can consider \emph{inequalities} instead. As it turns out, inequality constraints such as
	$\tr\left[\left(\bigotimes_{j=1}^n  R_j \right)\rho^\n\right] \geq \prod_{j=1}^n r(R_j)$,  \emph{do not} imply that the de Finetti representation can be constrained to states with $\tr\left( R_j \rho \right)\geq r(R_j)$.
	\item In Theorem \ref{maintheorem}, we have seen that it is sufficient to impose the linear constraints on $\rho^{(2)}$ in order to derive a constrained de Finetti representation. By contrast, imposing the constraints only on $\rho^{(1)}$, or on $\rho^\n$ for any odd $n$, is generally not sufficient to derive a constrained de Finetti representation.
	\item However, if $R$ is positive semidefinite, it is sufficient to impose $\tr R \rho^{(1)} =  0$ to derive a de Finetti representation constrained on states with $\tr R \rho =  0$. Since $R>0$ can be interpreted as a measurement operator, the implication is that if a measurement outcome cannot occur in a single trial, then the `unknown state' must also give $0$ probability for that outcome.
	\item If, on top of exchangeability, we require $\rho^{(n)}$ to be invariant under the action of a group acting on the single-trial space, we \emph{cannot} conclude that the corresponding de Finetti representation is constrained to states invariant under the group's action.
\end{itemize}

\section{Conclusions}

At the technical level, we have shown that an exchangeable state or process subject to a product of linear constraints, or mixtures thereof, can always be expressed as a mixture of product states or processes, with each factor satisfying the same constraints. In particular, our result implies that exchangeable processes with arbitrary causal structure are mixtures of i.i.d.\ processes; that exchangeable multi-time, causally ordered processes are mixtures of i.i.d.\ multi-time, causally ordered processes; and that exchangeable processes with specified no-signalling constraints are mixtures of i.i.d.\ processes with the same no-signalling constraints.

Our result shows that an exchangeability assumption can ground the notion of repeatability for experiments involving quantum causal structure. The result clarifies under what conditions we are entitled to regard an experiment as multiple repetitions, \emph{under equal conditions}, of a single experiment involving multiple events. This paradigm entitles us to `discover an unknown causal structure' without any ontological commitment regarding quantum states, processes, or causal structure: as long as we can justify the equivalence of different trials under permutations, and the possibility to repeat an experiment indefinitely, we can treat our scenario \emph{as if} it is governed by a `real' underlying process matrix $W$, which in turn encodes all potentially accessible information about causal relations.

In practice, once exchangeability is established, the distribution $P(W)$ appearing in the de Finetti representation, Eq.~\eqref{processdeFinetti}, effectively describes prior knowledge of the process of interest. Following a similar argument as for states \cite{Schack2001}, observing a set of outcomes $\vec{a}$ in each trial leads to an update of the prior according to Bayes rule:
\begin{equation}
P_{\textup{upd}}(W|\vec{a}) = P(W) \frac{P(\vec{a}|W)}{P(\vec{a})},
\label{processbayes}
\end{equation}
where $P(\vec{a}|W)$ is calculated using the Born rule for processes, Eq.~\eqref{processborn_Npartite}, and $P(\vec{a})$ is the marginal outcome probability given the prior: $P(\vec{a})= \int dW P(\vec{a}|W) P(W)$. 
Crucially, given prior information about the causal relations between the events involved, we can constrain accordingly the prior $P(W)$, without additional assumptions apart from exchangeability.
This opens the task to formalise the procedure into concrete algorithmic routines, extending existing methods for quantum states and quantum channels \cite{Granade2017qinferstatistical}.

	Just like the original de Finetti theorem, our result relies on the unphysical assumption that the experiment can be repeated an arbitrarily large number of times. However, finite de Finetti theorems exist for classical probabilities \cite{Diaconis1977, Diaconis1980}, quantum states \cite{Koenig2005, Koenig2009}, and quantum channels \cite{Berta2022}. The general idea is that a subsystem of a finite, symmetric state (or channel, process, etc.), should approximate a state (or channel, process, etc.) in the de Finetti form. Formulating a finite version of a de Finetti theorem for processes could provide a concrete starting point to treat practical scenarios where full exchangeability cannot be guaranteed. As noted earlier, finding a finite version of a constrained de Finetti theorem such as Theorem \ref{maintheorem} was also posed as an open question in Ref.~\cite{Berta2022}.

Finally, it is interesting that the constraints in our theorem do not include causal separability \cite{araujo15}. This is relevant, for example, in a scenario where all operations are well localised in time, but there is uncertainty about their order: In this case, one can constrain processes to probabilistic mixture of causally ordered ones, which is a special instance of causal separability \cite{oreshkov15, Wechs2019}. However, working from first principles, this only holds for the entire, one-shot process. Even assuming exchangeability, our current result \emph{does not} guarantee that a causally separable process can be written as a mixture of i.i.d.\ causally separable ones, and it is a compelling open question whether such an extension holds.

\begin{acknowledgments}
We thank Eric Cavalcanti, Andrew Doherty, Omar Fawzi, Matthew Pusey, and Frank Wilczek for valuable comments and discussions. This work has been supported by the Australian Research Council (ARC) Centre of Excellence for Engineered Quantum Systems (EQUS, CE170100009) and by the John Templeton Foundation through the ID \# 62312 grant, as part of the ‘The Quantum Information Structure of Spacetime’ Project (QISS). The opinions expressed in this publication are those of the authors and do not necessarily reflect the views of the John Templeton Foundation. Nordita is supported in part by NordForsk. We acknowledge the traditional owners of the land on which the University of Queensland is situated, the Turrbal and Jagera people.
\end{acknowledgments}



\appendix
\section{On the definition of extendibility}
\label{otherextendibility}

The definition of process extendibility, Eq.~\eqref{pmextendibility}, is rather strong: we ask that we recover the same $W^{(n)}$ from $W^{(n+1)}$ for every CPTP map $\widebar{M}^{\,n+1}$ (in particular, together with symmetry, this implies no signalling across different trials). This is a strictly stronger condition than state extendibility, for which we only ask to recover $W^{(n)}$ for $\widebar{M}^{\,n+1} = \id/d^O$---we show this with an explicit example in Sec.~\ref{statevsprocess}. As we have seen, state extendibility is sufficient to prove our main result, Theorem \ref{maintheorem}, but we can ask whether other definitions work too. We look at two meaningful options, one turning out to be too weak, while the second still being sufficient to prove the de Finetti theorem for processes.

\subsection{Weak extendibility}\label{weak}
It is meaningful to consider scenarios where we only impose that the $n$-trial process is recovered from the $n+1$-trial one for some particular CPTP map.
\begin{definition}
A sequence of process matrices $W^{(n)}$, $n\geq 1$, is \emph{weakly extendible} if there is a particular CPTP map $\widebar{M}$ such that
	\begin{equation}
	W^{(n)} = \tr_{n+1}\left[W^{(n+1)}(\id^{A^1\dots A^n} \otimes \widebar{M})\right].
	\label{weakextendibility}
	\end{equation}
\end{definition}
We say that $W^{(n)}$ is \emph{weakly exchangeable} if it is symmetric and weakly extendible. The following counterexample shows that weak exchangeability is not sufficient to ensure a de Finetti representation for the processes $W^{(n)}$.

\textbf{Counterexample.} Consider a scenario with one site $A$ per trial, with isomorphic input and output spaces $A_I$, $A_O$ of finite dimension $d$. For $n$ trials, define the ``idle'' process
\begin{equation}
W^{A^1\rightarrow \cdots \rightarrow A^n} := \rho \otimes [[\id_d]]^\on\otimes \id_d,
\label{idlecomb}
\end{equation}
where $\id_d$ is the $d$-dimensional identity operator and we use the notation
\begin{equation}
 [[\id_d]] := \sum_{j,k=1}^d\ketbra{j}{k}\otimes \ketbra{j}{k}
\end{equation}
to denote the Choi representation of the identity channel (and, as in the rest of this work, we order tensor factors according to $A^1,\dots,A^n$).
The idle process is a causally ordered process where the first site, $A^1$, receives the state $\rho$, and each site is linked to the next one through the identity map, in the order $A^1\prec\dots \prec A^n$.

Now consider the symmetrised process
\begin{equation} \label{symmetrised}
W^{(n)}_{\textrm{sym}}:= \frac{1}{n!} \sum_{\sigma\in S^n} W^{A^{\sigma(1)}\rightarrow \cdots \rightarrow A^{\sigma(n)}},
\end{equation}
where the sum is over all $n$-element permutations. By construction, $W^{(n)}_{\textrm{sym}}$ is symmetric under permutations. We can also see that it satisfies weak extendibility for the CPTP map $\widebar{M} = [[\id_d]]$. Indeed, plugging the identity map into site $A^{n+1}$ of the permuted idle process, $W^{A^{\sigma(1)}\rightarrow \cdots \rightarrow A^{\sigma(n+1)}}$, is equivalent to  plugging the identity map into site $A^{\sigma(n+1)}$ of the unpermuted  process, Eq.~\eqref{idlecomb}, which, for any permutation $\sigma$, always gives back the idle process with one less site. Summing over all permutations then yields $\tr_{n+1}\left[ W^{(n+1)}_{\textrm{sym}}(\id^{A^1\dots A^n} \otimes  [[\id_d]])\right] = W^{(n)}_{\textrm{sym}}$, in agreement with the definition of weak extendibility. In summary, $W^{(n)}_{\textrm{sym}}$ is a weakly exchangeable process for every $n$, but it is clearly not in the de Finetti form\footnote{A simple way to prove this is to note that $W^{(n)}_{\textrm{sym}}$ allows some signalling between any pair of sites, which is not possible for processes in the de Finetti form.}, so weak exchangeability is not sufficient to deduce a de Finetti representation.

Note that weak exchangeability for $\widebar{M} = \id^{A_IA_O}/d^{A_O}$ reduces to state exchangeability, which we have seen does lead to a de Finetti theorem. Hence, although weak exchangeability does not work in general, it does for specific choices of CPTP $\widebar{M}$ in Eq.~\eqref{weakextendibility}.

\subsection{Channel extendibility}
A process can be seen as a channel from all output spaces to all input spaces, where the Choi representation of the channel coincides with the process matrix. Operationally, this corresponds to restricting each party's operations to be a measurement followed by an independent state preparation. The state that the parties receive and measure in their input spaces is given by the channel acting on the output spaces. In formulas, a process matrix $W \in \mathfrak{L}(\mathcal{H}^{A}\otimes\mathcal{H}^{B}\otimes \cdots)$ defines a channel $\mathcal{W}: \mathcal{L}(\mathcal{H}^{A_O}\otimes \mathcal{H}^{B_O}\otimes \cdots) \to \mathcal{L}(\mathcal{H}^{A_I}\otimes \mathcal{H}^{B_I}\otimes \cdots)$ acting as
\begin{multline}
      \mathcal{W}(\rho^{A_O B_O \dots}) \\
			:= \tr_{A_OB_O\dots} \left[\left(\rho^{A_O B_O \dots}\right)^T  W  \right].
\end{multline}

In this view, it is meaningful to consider a version of extendibility that follows the definition for channels given by Fuchs, Schack, and Scudo \cite{Fuchs2004}:
\begin{definition}
A sequence of process matrices $W^{(n)}$, $n\geq 1$, is \emph{channel extendible} if, for every state $\rho$,
	\begin{multline}
	W^{(n)} \\
	\!\!\!\!\!\!= \tr_{n+1}\left[W^{(n+1)}(\id^{A^1\dots A^n} \otimes \id^{A^{n+1}_I} \otimes \! \rho^{A^{n+1}_O})\right].
	\label{channelextendibility}
	\end{multline}
\end{definition}
We say that $W^{(n)}$ is \emph{channel exchangeable} if it is symmetric and channel extendible. 

According to this definition, we only require to recover $W^{(n)}$ when we restrict $A^{n+1}$ to ``trace and re-prepare'' operations, where the input is traced out and an arbitrary state is prepared. 
For example, the process $W^{A^1\rightarrow \cdots \rightarrow A^n}$ defined in Eq.~\eqref{idlecomb} satisfies channel extendibility (because any state prepared at the last site simply gets traced out), although it does not satisfy symmetry. On the other hand,  the symmetrised process $W^{(n)}_{\textrm{sym}}$, Eq.~\eqref{symmetrised} is not channel extendible.

It is easy to see that channel extendibility can replace process extendibility in the de Finetti theorem. Indeed, fixing the re-prepared state $\rho$ to be the maximally mixed one, channel extendibility implies that the normalised states $W^\n/\tr W^\n$ are state extendible, which is sufficient to prove the constrained de Finetti theorem.

\subsection{State extendibility does not imply process extendibility}
\label{statevsprocess}
We have mentioned in Sec.~\ref{maintheoremsection} of the main text that state extendibility does not imply process extendibility. Let us see an explicit example.

\textbf{Example.} Consider a scenario with one site per trial, with isomorphic input and output spaces of dimension $d$. Consider a particular case of the idle process, Eq.~\eqref{idlecomb}, with initial maximally mixed state $\rho= \id_d/d =: \id^{\circ}$, but with reversed causal order $A^n\prec \dots \prec A^1 $:
\begin{multline}
W^{(n)} = W^{A^n\rightarrow \cdots \rightarrow A^1}_{\id^{\circ}} \\
 := U(r_n) \id^{\circ} \otimes [[\id_d]]^\on\otimes \id_dU^{\dag}(r_n),
\label{idleswapped}
\end{multline}
where $r_n$ is the reverse permutation of $n$ elements, $r_n(j) := n+1 - j$, and $U(r_n)$ its unitary representation as defined in Eq.~\eqref{permutations}. 

To see that $W^{A^n\rightarrow \cdots \rightarrow A^1}_{\id^{\circ}}$ is not process extendible (in fact, that is not even channel extendible), consider a trace-replace CPTP map $\widebar{M} = \id_d \otimes \rho$, where $\rho\neq\id^{\circ}$ is an arbitrary non-maximally-mixed state. Plugging $\widebar{M}$ into the site $A^{n+1}$ (which is the first in the causal order) of $W^{(n+1)}$ gives an $n$-site idle process, in the order $A^n\prec\dots \prec A^1$, starting with $\rho$:
\begin{multline}
\tr_{n+1}\left[W^{(n+1)}(\id^{A^1\dots A^n} \otimes \widebar{M})\right] \\
\tr_{n+1}\left[W^{A^n\rightarrow \cdots \rightarrow A^1}_{\id^{\circ}} (\id^{A^1\dots A^n} \otimes \id_d \otimes \rho )\right] \\
= U(r_n) \rho \otimes [[\id_d]]^\on\otimes \id_dU^{\dag}(r_n) \\
=: W^{A^n\rightarrow \cdots \rightarrow A^1}_{\rho},
\end{multline}
which is not the same as $W^{(n)}$ as defined in Eq.~\eqref{idleswapped}.

On the other hand, substituting  $\rho = \id^{\circ}$ in the above calculation gives back $W^{A^n\rightarrow \cdots \rightarrow A^1}_{\id^{\circ}} = W^\n$. This implies that the sequence of states $\rho^\n := W^\n/d^{n}$ satisfies state extendibility. Hence we have found a sequence of (causally ordered) processes that is state extendible (after normalisation) but not process (nor channel) extendible.

\section{Proofs of lemmas}
\label{lemmaproofs}

\subsection{Proof of Lemma \ref{unbiasedconstraintlemma}}
To simplify the notation, let us write $r_j:= r(R_j)$, so we can restate the lemma as
\begin{lemma*}
A sequence of exchangeable states $\rho^{(n)}$ satisfies product expectation constraints, of the form 
\begin{equation}
\tr\left[\left(\bigotimes_{j=1}^n  R_{j} \right)\rho^{(n)}\right] = \prod_{j=1}^n r_j
	\label{Acompatible}
\end{equation}
for a set of operator $\mathfrak{R}\subset \mathfrak{L}\left(\mathcal{H}_\T\right)$, if and only if it satisfies
\begin{equation}
\tr\left[\left(\bigotimes_{j=1}^n  \sigma_{j} \right)  \rho^{(n)}\right] = 0
\label{linearcompatibleA}
\end{equation}
for all 
\begin{equation}
\sigma_j \coloneqq R_j - r_j, 
\label{RtosigmaA}
\end{equation} 
$R_j\in\mathfrak{R}$, $j=1,\dots,n$.
\end{lemma*}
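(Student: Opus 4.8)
\emph{Proof plan.} The plan is to prove both directions simultaneously by expanding a tensor product of the shifted single-trial operators into a signed sum over subsets of trials, and then reducing each resulting mixed term to a lower-order expectation via exchangeability. Since $r_j$ is a scalar, \eqref{RtosigmaA} reads $\sigma_j=R_j-r_j\id$, so $R_j$ and $\sigma_j$ differ only by a multiple of the identity, and the two constraints are linked by the expansion
\begin{equation*}
\bigotimes_{j=1}^n \sigma_j=\sum_{S\subseteq\{1,\dots,n\}}(-1)^{|S^c|}\Big(\prod_{j\notin S}r_j\Big)\,T_S,
\end{equation*}
where $T_S$ carries the factor $R_j$ in each position $j\in S$ and the factor $\id$ in each position $j\notin S$; the reverse expansion of $\bigotimes_j R_j$ in terms of the $\sigma_j$ and $\id$ is entirely analogous, but with nonnegative coefficients $\prod_{j\notin S}r_j$.

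The key step is to evaluate a mixed trace of the form $\tr[(\bigotimes_{j\in S}R_j)\otimes\id_{S^c}\,\rho^{(n)}]$. Here I would use exchangeability: by permutation symmetry I may take $S^c$ to be the last $n-|S|$ positions, and by state extendibility $\tr_{S^c}\rho^{(n)}=\rho^{(|S|)}$, so the mixed trace collapses to $\tr[(\bigotimes_{j\in S}R_j)\,\rho^{(|S|)}]$. This reduction is exactly what lets the hypothesis be applied even though $\id$ itself need not lie in $\mathfrak{R}$: once the identity factors are traced away, only operators of $\mathfrak{R}$ remain, and \eqref{Acompatible} at level $|S|$ evaluates the term to $\prod_{j\in S}r_j$. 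Running the same reduction with $\sigma_j$ in place of $R_j$ and invoking \eqref{linearcompatibleA} at level $|S|$ instead makes every such term vanish whenever $S\neq\emptyset$.

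For the forward implication each surviving term then contributes
\begin{equation*}
(-1)^{|S^c|}\Big(\prod_{j\notin S}r_j\Big)\Big(\prod_{j\in S}r_j\Big)=(-1)^{|S^c|}\prod_{j=1}^n r_j,
\end{equation*}
so the subset sum telescopes through $\sum_S(-1)^{|S^c|}=(1-1)^n=0$, giving \eqref{linearcompatibleA}. For the reverse implication every term with $S\neq\emptyset$ vanishes, leaving only $S=\emptyset$, i.e.\ $(\prod_j r_j)\tr\rho^{(n)}=\prod_j r_j$, which is \eqref{Acompatible}. I expect the main obstacle to be the careful justification of the reduction step: one must verify that tracing out the identity-valued tensor factors returns precisely the reduced exchangeable state $\rho^{(|S|)}$ uniformly over which positions are removed---this is where both permutation symmetry and state extendibility enter---and that the product expectation (respectively linear) constraint is available at every intermediate level $0\le|S|\le n$, not merely at level $n$. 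Once this is secured, the combinatorial resummation is routine.
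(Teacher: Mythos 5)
Your proof is correct and follows essentially the same route as the paper's: your subset expansion of $\bigotimes_j\sigma_j$ (resp.\ $\bigotimes_j R_j$) is exactly the paper's binary-string expansion over $\mu_1,\dots,\mu_n$, and your reduction of the mixed terms to level-$|S|$ expectations via permutation symmetry plus state extendibility is what the paper accomplishes by invoking exchangeability and the de Finetti form of $\rho^{(n)}$. The only cosmetic difference is that the paper proves the converse by induction on $n$, whereas you apply the level-$|S|$ constraints directly for all $|S|\le n$ --- which the hypothesis (the constraints holding for the whole sequence) equally licenses.
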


\begin{proof}

Let us introduce the notation
\begin{equation}
X^{0}_j := R_j , \qquad
X^{1}_j := r_j
\end{equation} 
(where, as usual, identity operators are implied in $r_j\equiv r_j \id$).

We can expand the left hand side of Eq.~\eqref{linearcompatibleA} as
\begin{multline} \label{expansion}
\tr\left[\left(\bigotimes_{j=1}^{n}\sigma_j \right)\rho^{(n)}\right] \\
 \!\! = \!\!\! \sum_{\mu_1\dots \mu_n=0,1}\!\!\!  \tr\left[\left(\bigotimes_{j=1}^{n}(-1)^{\mu_j}X_j^{\mu_j} \right)\rho^{(n)}\right].
\end{multline}
Let us now assume that Eq.~\eqref{Acompatible} holds for all $n$. Using the exchangeability of $\rho^\n$, and hence its de Finetti representation, we can see that 
\begin{equation}
\tr\left[\bigotimes_{j=1}^{n}X_j^{\mu_j} \rho^{(n)}\right] = \prod_{j=1}^n r_j
\end{equation}
for all values of $\mu_1,\dots,\mu_n=0,1$. So Eq.~\eqref{expansion} becomes
\begin{multline*}
\sum_{\mu_1\dots \mu_n=0,1}\!\!\!  \tr\left[\left(\bigotimes_{j=1}^{n}(-1)^{\mu_j}X_j^{\mu_j} \right)\rho^{(n)}\right] \\
= \prod_{j=1}^n\left(\sum_{\mu_j=0,1} (-1)^{\mu_j} r_j\right)  = 0.
\end{multline*}
This proves that condition \eqref{Acompatible} implies \eqref{linearcompatibleA}.

For the other direction, we can proceed by induction. 
For $n=1$, using $\tr\rho^{(1)} = 1$ we see directly that 
\begin{equation} 
\tr \left[(R_1 -r_1) \rho^{(1)}\right] = 0
\end{equation}
is equivalent to 
\begin{equation} 
\tr \left[R_1 \rho^{(1)}\right] = r_1.
\end{equation}
Assume now that Eq.~\eqref{linearcompatibleA} holds for some $n=\bar{n}$ and that Eq.~\eqref{Acompatible} holds for all $n<\bar{n}$. This implies 
\begin{equation}
\tr\left[\bigotimes_{j=1}^{\bar{n}}X_j^{\mu_j} \rho^{(\bar{n})}\right] = \prod_{j=1}^{\bar{n}} r_j
\end{equation}
for $\sum_{j=1}^{\bar{n}}\mu_j >0$; that is, for all choices of $\mu_1,\dots,\mu_{\bar{n}}$ except $\mu_1 = \dots =\mu_{\bar{n}} = 0$, which corresponds to the term $\tr\left[\bigotimes_{j=1}^{\bar{n}}R_j \rho^{(\bar{n})}\right]$. 
Using again the  expansion \eqref{expansion}, we have 
\begin{multline}
\tr\left[\left(\bigotimes_{j=1}^{\bar{n}}\sigma_j \right)\rho^{(n)}\right] \\
=\tr\left[\bigotimes_{j=1}^{\bar{n}}R_j \rho^{(\bar{n})}\right]  - \prod_{j=1}^{\bar{n}} r_j.
\end{multline}
Together with \eqref{linearcompatibleA}, this implies \eqref{Acompatible} for an arbitrary $\bar{n}$, concluding the proof.
\end{proof}

\subsection{Proof of Lemma \ref{linearlemma}}
\label{linearlemmaappendix}
Let us re-state the lemma:
\begin{lemma*}
Given two inner product spaces $V^1$, $V^2$, a measurable space $Y$, a set of linear functions $L_y: V^1\rightarrow V^2$, $y\in Y$, and for any $v\in V^1$
\begin{equation} 
L_y(v)=0 \quad \text{\emph{a.e. for }} y\in Y
\end{equation}
if and only if, for any strictly positive measure $q$ over $Y$ ($q(y)>0$ a.e.)
\begin{equation}
\int_Y dy \, q(y)L^{\dag}_y\circ L_y(v)=0,
\end{equation}
\end{lemma*}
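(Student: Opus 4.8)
The plan is to prove the two implications separately, with essentially all the content residing in the backward direction, where the key device is to use the adjoint structure to extract a manifestly non-negative scalar quantity from the operator-valued identity.

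The forward direction is immediate. If $L_y(v)=0$ for almost every $y\in Y$, then $L_y^{\dag}\circ L_y(v)=L_y^{\dag}(0)=0$ for almost every $y$, so the integrand $q(y)\,L_y^{\dag}\circ L_y(v)$ vanishes a.e.\ and the integral is zero, regardless of the choice of strictly positive measure $q$.

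For the backward direction, I would start from the hypothesis $\int_Y dy\, q(y)\,L_y^{\dag}\circ L_y(v)=0$ in $V^1$ for some measure with $q(y)>0$ a.e., and pair this identity with the vector $v$ itself. Taking the inner product with $v$ on $V^1$ and using that a continuous linear functional commutes with the integral gives $\langle v,\int_Y dy\, q(y)\,L_y^{\dag}\circ L_y(v)\rangle=\int_Y dy\, q(y)\,\langle v,L_y^{\dag}L_y(v)\rangle$. By definition of the adjoint, $\langle v,L_y^{\dag}L_y(v)\rangle=\langle L_y(v),L_y(v)\rangle=\|L_y(v)\|^2$, so the whole expression reduces to the scalar integral $\int_Y dy\, q(y)\,\|L_y(v)\|^2$, which must equal zero since the operator-valued integral does. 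The final step is the standard measure-theoretic fact that a non-negative integrand with vanishing integral is zero a.e.: hence $q(y)\,\|L_y(v)\|^2=0$ a.e., and because $q(y)>0$ a.e.\ we conclude $\|L_y(v)\|^2=0$, i.e.\ $L_y(v)=0$, for almost every $y$, as required. Note that this works verbatim whether the spaces are real or complex, since $\langle L_y(v),L_y(v)\rangle$ is real and non-negative in both cases.

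The main obstacle is technical rather than conceptual: one must assume the map $y\mapsto L_y^{\dag}L_y(v)$ is measurable and integrable so that the vector-valued integral is well defined, and justify interchanging the inner product with the integral (which is immediate once $\langle v,\cdot\rangle$ is continuous and linear). The genuinely essential ingredient is the pairing with $v$, which converts $L_y^{\dag}L_y$ into the non-negative quantity $\|L_y(v)\|^2$; without exploiting the adjoint structure in this way, one could not collapse the operator identity into a single scalar positivity argument, and the a.e.\ vanishing of $L_y(v)$ would not follow.
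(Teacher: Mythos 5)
Your proof is correct and follows essentially the same route as the paper's: take the inner product of the integral identity with $v$, use the adjoint to rewrite the integrand as $q(y)\,\langle L_y(v) , L_y(v)\rangle \geq 0$, and conclude a.e.\ vanishing from strict positivity of $q$. The forward direction is likewise handled identically, so there is nothing to add.
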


\begin{proof}
If $L^{\dag}_y = 0$ a.e.\ for $y \in Y$, then it is clear that 
\begin{equation} \label{integraly}
\int_Y dy \, q(y)L^{\dag}_y\circ L_y(v)=0
\end{equation}
for any measure $q$. We need to prove that, for any strictly positive $q$, Eq.~\eqref{integraly} implies $L_y(v)=0$ a.e.\ for $y \in Y$. To this end, it is sufficient to take the inner product of Eq.~\eqref{integraly} with $v$:
\begin{multline} \label{integralLy}
0=\braket{v}{\int_Y dy \, q(y)L^{\dag}_y\circ L_y(v)} \\
= \int_Y dy \, q(y) \braket{L_y(v)}{L_y(v)}.
\end{multline}
As the integrand is non-negative a.e., and $q(y)$ is strictly positive, Eq.~\eqref{integralLy} implies that $L_y(v) = 0$ a.e.\ in $Y$, which proves the lemma.
\end{proof}

\section{Other classes of constraints}
\label{appendixconstraints}
Here we consider some modifications to the hypothesis of Theorem \ref{maintheorem}, some of which lead to a corresponding de Finetti representation and some of which do not. Below, it will always be assumed that $\rho^{(n)}$ are exchangeable states for all $n$.

\subsection{Inequality constraints}
\label{inequalityconstraints}
As in the case of product expectation constraints, consider a set of single-trial observables $\mathfrak{R}$ and a function $r:\mathfrak{R} \rightarrow \mathbb{R}$. However, assume we are only given a bound on the expectation values, in the form
\begin{equation}
tr\left[\left(\bigotimes_{j=1}^n  R_j \right)\rho^\n\right] \geq \prod_{j=1}^n r(R_j).
\label{ineqconstraints}
\end{equation}
for an exchangeable $\rho^{(n)}$. This constraint \emph{does not} imply a representation of the form
\begin{equation}
	\rho^\n =\int_{\tr\left( R \rho \right)\geq r(R)\, \forall R\in \mathfrak{R}} d\rho P(\rho)  \rho^{\otimes n}.
	\label{inequalitydefinetti}
\end{equation}

\textbf{Counterexample}
As single-trial space, let us take a two-dimensional system with basis vectors $\ket{0}$ and $\ket{1}$.  Consider an operator $R$ such that
\begin{equation}
\!\! \bra{0}R\ket{0}=r_0\geq 0, \quad \bra{1}R\ket{1}=r_1 > r_0.
\label{Hexpectations}
\end{equation}
Now consider the exchangeable states
\begin{equation}
\rho^{(n)}:= \frac{1}{2}\left(\proj{0}^{\otimes n} + \proj{1}^{\otimes n}\right).
\label{counterexample}
\end{equation}
These states satisfy
\begin{align}
\label{inequalityE}
\tr\left(R^{\on}\rho^{(n)}\right) =& \frac{1}{2}\left(r_0^{n} + r_1^{n}\right) \geq {r}^n, \\ \nonumber
  r \coloneqq & \frac{r_0+r_1}{2}.
\end{align}
Eq.~\eqref{inequalityE} is an inequality constraint of the form \eqref{ineqconstraints}. However, $\rho^{(n)}$ cannot be decomposed in the form \eqref{inequalitydefinetti}: Eq.~\eqref{counterexample} is already in the de Finetti form, where the probability
$$P(\rho) = \frac{1}{2}\left[\delta(\rho - \proj{0}) + \delta(\rho - \proj{1}) \right]$$
does not vanish on $\rho=\proj{0}$, and  $\tr\left( R \proj{0}\right) = r_0 < r$.

\subsection{Single- and odd-trial constraints} \label{singlecopy}
\subsubsection{Single- and odd-trial constraints without a constrained de Finetti representation}
Let us consider some single-trial operator $R$ and a real number $r$. We can find exchangeable states $\rho^{(n)}$ that satisfy the single-trial constraint $\tr R \rho^{(1)} = r$ but whose de Finetti representation does not have support limited to states with $\tr R \rho = r $. 

As a counterexample, we can take again the state in Eq.~\eqref{counterexample}, $\rho^{(n)}:= \frac{1}{2}\left(\proj{0}^{\otimes n} + \proj{1}^{\otimes n}\right)$, and as observable the Pauli operator $R=Z$. Since $\rho^{(1)}=\frac{\id}{2}$, we have $\langle Z\rangle = 0$, even though $\tr Z\rho = \pm 1$ for the two states appearing in the de Finetti representation, $\rho = \proj{0}, \proj{1}$. In fact, for any odd $n$ we have $\langle Z^\on\rangle = 0 = \langle Z \rangle^n$, so, in general, imposing a product expectation constraint only on an odd number of trials does not imply a constrained de Finetti representation.

Note also that, since $\rho^\n$ is symmetric, single-trial constraints can be expressed equivalently as $\frac{1}{n}\langle R\otimes \id \dots \otimes \id + \dots + \id \dots \otimes \id \otimes R \rangle = r$, namely as a constraint on the sample mean of the operator $R$. The failure of the constrained de Finetti representation tells us (the known fact) that fixing the expectation value of the sample mean of an observable does not fix the expectation value of that observable for the `unknown state'.

\subsubsection{Single-trial constraints with a constrained de Finetti representation}

Consider a self-adjoint operator $R\geq r$; that is, such that $R-r$ is positive semidefinite for some given real number $r$. In this case, given an exchangeable sequence of states $\rho^\n$, it is sufficient to impose the single-trial constraint
\begin{equation}
\label{singlecopyconstraint}
\tr R \rho^{(1)} = r
\end{equation}
to deduce that the de Finetti representation can be restricted to states $\rho$ such that $\tr R \rho = r$. 

\begin{proof}
 $R \geq r$ implies that, for every state $\rho$, 
$\tr R \rho \geq r$. Therefore, the de Finetti representation for a single trial gives 
\begin{equation} \label{expandedsingleexpectation}
\tr R \rho^{(1)} = \int_{\mathfrak{S}} d\rho P(\rho ) \tr R \rho \geq r,
\end{equation}
with equality only possible if $P(\rho) = 0$ a.e.\ for $\tr R \rho > r$. This means that we can write the $n$-trial state as
\begin{equation} 
\rho^{(n)} = \int_{\subalign{& \rho \in \mathfrak{S} \\ 
	& \tr R\rho \geq r}}  d\rho P(\rho ) \rho^\on.
\end{equation}
\end{proof}

\subsection{States invariant under joint group action -- constraints with the wrong sign} \label{groupinvariant}
Given a group $G$ and a unitary representation $U_g$ on the single-trial space, $g\in G$, consider exchangeable states $\rho^{(n)}$ that are invariant under the joint action of $G$:
\begin{equation} \label{Uinvariant}
U_g^{\otimes n} \rho^{(n)} \left(U_g^{\dag}\right)^{\otimes n} = \rho^{(n)} \quad \forall g\in G.
\end{equation} 
One may ask whether states of this type can be decomposed as 
\begin{equation} \label{wernersymmetric}
\rho^{(n)} = \int_{U_g\rho U_g^{\dag} = \rho} d\rho \, P(\rho) \rho^{\otimes n},
\end{equation}
but this is not necessarily true. A counterexample is given by  
\begin{equation}
\rho^{(n)} = \int d\psi \proj{\psi}^{\otimes n},
\end{equation}
where $d\psi$ is the measure on pure states induced by the Haar measure 
 (i.e., the unique measure invariant under arbitrary unitary transformations).
 $\rho^{(n)}$ is exchangeable and invariant under joint local unitaries, however, the probability measure has support over pure states, which are not invariant under arbitrary unitaries.

It is interesting that constraint \eqref{Uinvariant} is \emph{almost} of the form \eqref{exchangeableconstraint} of a de Finetti-type constraint, but it has the wrong sign: defining the linear operators $L^g_1(\rho) := U_g\rho U_g^{\dag}$ and $L^g_2(\rho) := -\rho$, we can write the single-trial constraint as $(L^g_1 + L^g_2)(\rho) = U_g\rho^{(1)} U_g^{\dag} - \rho^{(1)} =0$. However, this choice of functions implies the wrong two-trial constraint: in
$$\sum_{j=1}^2 L^g_j\otimes L^g_j (\rho^{(2)}) = U_g\otimes U_g\rho^{(2)} U_g^{\dag}\otimes U_g^{\dag} + \rho^{(2)} =0$$
the sign in front of $\rho^{(2)}$ is the opposite as compared to \eqref{Uinvariant} for $n=2$.


\begin{thebibliography}{100}

\bibitem{Bartos2023}
František Bartoš, Alexandra Sarafoglou, Henrik~R. Godmann, Amir Sahrani,
  et~al.
\newblock ``Fair coins tend to land on the same side they started: Evidence
  from 350,757 flips''~(2023).
\newblock  \href{http://arxiv.org/abs/2310.04153}{arXiv:2310.04153}.

\bibitem{DeFinetti1929}
Bruno De~Finetti.
\newblock ``Funzione caratteristica di un fenomeno aleatorio''.
\newblock In Atti del Congresso Internazionale dei Matematici: Bologna del 3 al
  10 de settembre di 1928.
\newblock Pages 179--190.
\newblock ~(1929).
\newblock
  url:~\url{http://www.brunodefinetti.it/Opere/funzioneCaratteristica.pdf}.

\bibitem{deFinetti1937}
Bruno de~Finetti.
\newblock ``La prévision : ses lois logiques, ses sources subjectives''.
\newblock Annales de l'institut Henri Poincaré {\bf 7}, 1--68~(1937).
\newblock  url:~\url{http://eudml.org/doc/79004}.

\bibitem{Hewitt1955}
Edwin Hewitt and Leonard~J. Savage.
\newblock ``Symmetric measures on cartesian products''.
\newblock \href{https://dx.doi.org/10.1090/s0002-9947-1955-0076206-8}{Trans.
  Am. Math. Soc. {\bf 80}, 470--501}~(1955).

\bibitem{Kingman1978}
J.~F.~C. Kingman.
\newblock ``Uses of exchangeability''.
\newblock \href{https://dx.doi.org/10.1214/aop/1176995566}{Ann. Probab. {\bf
  6}, 183--197}~(1978).

\bibitem{Aldous1985}
David~J. Aldous.
\newblock ``Exchangeability and related topics''.
\newblock In P.~L. Hennequin, editor, {\'E}cole d'{\'E}t{\'e} de
  Probabilit{\'e}s de Saint-Flour XIII --- 1983.
\newblock \href{https://dx.doi.org/10.1007/BFb0099421}{Pages 1--198}.
\newblock Springer Berlin Heidelberg~(1985).

\bibitem{OBrien1988}
Raymond~J. O'Brien.
\newblock ``{Bayesian Inference and Decision Techniques: Essays in Honor of
  Bruno de Finetti. Studies in Bayesian Econometrics and Statistics, Vol. 6}''.
\newblock \href{https://dx.doi.org/10.2307/2233941}{The Economic Journal {\bf
  98}, 883--884}~(1988).

\bibitem{Stoermer1969}
Erling St{\o{}}rmer.
\newblock ``Symmetric states of infinite tensor products of
  {C$^{\ast}$}-algebras''.
\newblock
  \href{https://dx.doi.org/https://doi.org/10.1016/0022-1236(69)90050-0}{J.
  Funct. Anal. {\bf 3}, 48--68}~(1969).

\bibitem{Hudson1976}
R.~L. Hudson and G.~R. Moody.
\newblock ``Locally normal symmetric states and an analogue of de {Finetti}'s
  theorem''.
\newblock \href{https://dx.doi.org/10.1007/BF00534784}{Z.
  Wahrscheinlichkeitstheorie verw Gebiete {\bf 33}, 343--351}~(1976).

\bibitem{Fuchs2004}
Christopher~A. Fuchs, R\"udiger Schack, and Petra~F. Scudo.
\newblock ``De {Finetti} representation theorem for quantum-process
  tomography''.
\newblock \href{https://dx.doi.org/10.1103/PhysRevA.69.062305}{Phys. Rev. A
  {\bf 69}, 062305}~(2004).

\bibitem{Caves2002a}
Carlton~M. Caves, Christopher~A. Fuchs, and Rüdiger Schack.
\newblock ``Unknown quantum states: The quantum de {Finetti} representation''.
\newblock \href{https://dx.doi.org/10.1063/1.1494475}{J. Math. Phys. {\bf 43},
  4537--4559}~(2002).

\bibitem{Schack2001}
R\"udiger Schack, Todd~A. Brun, and Carlton~M. Caves.
\newblock ``Quantum bayes rule''.
\newblock \href{https://dx.doi.org/10.1103/PhysRevA.64.014305}{Phys. Rev. A
  {\bf 64}, 014305}~(2001).

\bibitem{Granade2017qinferstatistical}
Christopher Granade, Christopher Ferrie, Ian Hincks, Steven Casagrande, Thomas
  Alexander, Jonathan Gross, Michal Kononenko, and Yuval Sanders.
\newblock ``{QI}nfer: {S}tatistical inference software for quantum
  applications''.
\newblock \href{https://dx.doi.org/10.22331/q-2017-04-25-5}{{Quantum} {\bf 1},
  5}~(2017).

\bibitem{Fannes1980}
M.~Fannes, H.~Spohn, and A.~Verbeure.
\newblock ``Equilibrium states for mean field models''.
\newblock \href{https://dx.doi.org/10.1063/1.524422}{J. Math. Phys. {\bf 21},
  355--358}~(1980).

\bibitem{Krumnow2017}
Christian Krumnow, Zoltán Zimborás, and Jens Eisert.
\newblock ``{A fermionic de Finetti theorem}''.
\newblock \href{https://dx.doi.org/10.1063/1.4998944}{J. Math. Phys. {\bf 58},
  122204}~(2017).

\bibitem{Renner2007}
Renato Renner.
\newblock ``Symmetry of large physical systems implies independence of
  subsystems''.
\newblock \href{https://dx.doi.org/10.1038/nphys684}{Nature Physics {\bf 3},
  645--649}~(2007).

\bibitem{Christandl2009}
Matthias Christandl, Robert K\"onig, and Renato Renner.
\newblock ``Postselection technique for quantum channels with applications to
  quantum cryptography''.
\newblock \href{https://dx.doi.org/10.1103/PhysRevLett.102.020504}{Phys. Rev.
  Lett. {\bf 102}, 020504}~(2009).

\bibitem{Renner2009}
R.~Renner and J.~I. Cirac.
\newblock ``de {Finetti} representation theorem for infinite-dimensional
  quantum systems and applications to quantum cryptography''.
\newblock \href{https://dx.doi.org/10.1103/PhysRevLett.102.110504}{Phys. Rev.
  Lett. {\bf 102}, 110504}~(2009).

\bibitem{Brandao2010}
Fernando G. S.~L. Brand{\~a}o and Martin~B. Plenio.
\newblock ``A generalization of quantum stein's lemma''.
\newblock \href{https://dx.doi.org/10.1007/s00220-010-1005-z}{Commun. Math.
  Phys. {\bf 295}, 791--828}~(2010).

\bibitem{Navascues2009}
Miguel Navascu\'es, Masaki Owari, and Martin~B. Plenio.
\newblock ``Power of symmetric extensions for entanglement detection''.
\newblock \href{https://dx.doi.org/10.1103/PhysRevA.80.052306}{Phys. Rev. A
  {\bf 80}, 052306}~(2009).

\bibitem{Brandao2011a}
Fernando G. S.~L. Brand{\~a}o, Matthias Christandl, and Jon Yard.
\newblock ``Faithful squashed entanglement''.
\newblock \href{https://dx.doi.org/10.1007/s00220-011-1302-1}{Commun. Math.
  Phys. {\bf 306}, 805}~(2011).

\bibitem{Brandao2011}
Fernando~G.S.L. Brand\~{a}o, Matthias Christandl, and Jon Yard.
\newblock ``A quasipolynomial-time algorithm for the quantum separability
  problem''.
\newblock In Proceedings of the Forty-Third Annual ACM Symposium on Theory of
  Computing.
\newblock \href{https://dx.doi.org/10.1145/1993636.1993683}{Page 343–352}.
\newblock STOC '11New York, NY, USA~(2011). Association for Computing
  Machinery.

\bibitem{Brandao2017}
Fernando G. S.~L. Brand{\~a}o and Aram~W. Harrow.
\newblock ``{Quantum de {Finetti} Theorems Under Local Measurements with
  Applications}''.
\newblock \href{https://dx.doi.org/10.1007/s00220-017-2880-3}{Commun. Math.
  Phys. {\bf 353}, 469--506}~(2017).

\bibitem{Hudson1981}
R.~L. Hudson.
\newblock ``Analogs of de {Finetti}'s theorem and interpretative problems of
  quantum mechanics''.
\newblock \href{https://dx.doi.org/10.1007/BF00726951}{Found Phys {\bf 11},
  805--808}~(1981).

\bibitem{Barrett2009}
Jonathan Barrett and Matthew Leifer.
\newblock ``The de {Finetti} theorem for test spaces''.
\newblock \href{https://dx.doi.org/10.1088/1367-2630/11/3/033024}{New J. Phys.
  {\bf 11}, 033024}~(2009).

\bibitem{Christandl2009a}
Matthias Christandl and Ben Toner.
\newblock ``{Finite de Finetti theorem for conditional probability
  distributions describing physical theories}''.
\newblock \href{https://dx.doi.org/10.1063/1.3114986}{J. Math. Phys. {\bf 50},
  042104}~(2009).

\bibitem{ArnonFriedman2015}
Rotem Arnon-Friedman and Renato Renner.
\newblock ``{de Finetti reductions for correlations}''.
\newblock \href{https://dx.doi.org/10.1063/1.4921341}{J. Math. Phys. {\bf 56},
  052203}~(2015).

\bibitem{Laskey2007}
K.~B. {Laskey}.
\newblock ``{Quantum Causal Networks}''~(2007).
\newblock  \href{http://arxiv.org/abs/0710.1200}{arXiv:0710.1200}.

\bibitem{Leifer2013}
Matthew~S Leifer and Robert~W Spekkens.
\newblock ``Towards a formulation of quantum theory as a causally neutral
  theory of bayesian inference''.
\newblock \href{https://dx.doi.org/10.1103/PhysRevA.88.052130}{Phys. Rev. A
  {\bf 88}, 052130}~(2013).

\bibitem{cavalcanti2014modifications}
Eric~G Cavalcanti and Raymond Lal.
\newblock ``On modifications of reichenbach's principle of common cause in
  light of bell's theorem.''.
\newblock \href{https://dx.doi.org/10.1088/1751-8113/47/42/424018}{J.\ Phys.\
  A: Math.\ Theor. {\bf 47}, 424018}~(2014).

\bibitem{fritzbeyond2015}
Tobias Fritz.
\newblock ``Beyond bell’s theorem ii: Scenarios with arbitrary causal
  structure''.
\newblock \href{https://dx.doi.org/10.1007/s00220-015-2495-5}{Comm. Math.
  Phys.Pages 1--44}~(2015).

\bibitem{woodlesson2012}
Christopher~J. Wood and Robert~W. Spekkens.
\newblock ``The lesson of causal discovery algorithms for quantum correlations:
  {Causal} explanations of {Bell}-inequality violations require fine-tuning''.
\newblock \href{https://dx.doi.org/10.1088/1367-2630/17/3/033002}{New J. Phys.
  {\bf 17}, 033002}~(2015).

\bibitem{Henson2015}
Joe Henson, Raymond Lal, and Matthew~F Pusey.
\newblock ``Theory-independent limits on correlations from generalized bayesian
  networks.''.
\newblock \href{https://dx.doi.org/10.1088/1367-2630/16/11/113043}{New\ J.\
  Phys. {\bf 16}, 113043}~(2014).

\bibitem{pienaar2014graph}
Jacques Pienaar and {\v C}aslav Brukner.
\newblock ``A graph-separation theorem for quantum causal models.''.
\newblock \href{https://dx.doi.org/10.1088/1367-2630/17/7/073020}{New\ J.\
  Phys. {\bf 17}, 073020}~(2015).

\bibitem{chavesinformation2015}
Rafael Chaves, Christian Majenz, and David Gross.
\newblock ``Information–theoretic implications of quantum causal
  structures''.
\newblock \href{https://dx.doi.org/10.1038/ncomms6766}{Nat. Commun.{\bf
  6}}~(2015).

\bibitem{ried2015quantum}
Katja Ried, Megan Agnew, Lydia Vermeyden, Dominik Janzing, Robert~W Spekkens,
  and Kevin~J Resch.
\newblock ``A quantum advantage for inferring causal structure''.
\newblock \href{https://dx.doi.org/10.1038/nphys3266}{Nat. Phys. {\bf 11},
  414--420}~(2015).

\bibitem{costa2016}
Fabio Costa and Sally Shrapnel.
\newblock ``Quantum causal modelling''.
\newblock
  \href{https://dx.doi.org/https://doi.org/10.1088/1367-2630/18/6/063032}{New
  J. of Phys. {\bf 18}, 063032}~(2016).

\bibitem{Shrapnel2018causation}
Sally Shrapnel and Fabio Costa.
\newblock ``Causation does not explain contextuality''.
\newblock \href{https://dx.doi.org/10.22331/q-2018-05-18-63}{{Quantum} {\bf 2},
  63}~(2018).

\bibitem{Allen2016}
John-Mark~A. Allen, Jonathan Barrett, Dominic~C. Horsman, Ciar\'an~M. Lee, and
  Robert~W. Spekkens.
\newblock ``Quantum common causes and quantum causal models''.
\newblock \href{https://dx.doi.org/10.1103/PhysRevX.7.031021}{Phys. Rev. X {\bf
  7}, 031021}~(2017).

\bibitem{Giarmatzi2018}
Christina Giarmatzi and Fabio Costa.
\newblock ``A quantum causal discovery algorithm''.
\newblock \href{https://dx.doi.org/10.1038/s41534-018-0062-6}{npj Quant. Inf.
  {\bf 4}, 17}~(2018).

\bibitem{Barrett2019}
Jonathan Barrett, Robin Lorenz, and Ognyan Oreshkov.
\newblock ``Quantum causal models''~(2019).
\newblock  \href{http://arxiv.org/abs/1906.10726v1}{arXiv:1906.10726v1}.

\bibitem{Pearl2021}
J.~C. Pearl and E.~G. Cavalcanti.
\newblock ``Classical causal models cannot faithfully explain {B}ell
  nonlocality or {K}ochen-{S}pecker contextuality in arbitrary scenarios''.
\newblock \href{https://dx.doi.org/10.22331/q-2021-08-05-518}{{Quantum} {\bf
  5}, 518}~(2021).

\bibitem{oreshkov12}
O.~{Oreshkov}, F.~{Costa}, and {\v C}.~{Brukner}.
\newblock ``{Quantum correlations with no causal order}''.
\newblock \href{https://dx.doi.org/10.1038/ncomms2076}{Nat. Commun. {\bf 3},
  1092}~(2012).

\bibitem{oreshkov15}
Ognyan Oreshkov and Christina Giarmatzi.
\newblock ``Causal and causally separable processes''.
\newblock \href{https://dx.doi.org/10.1088/1367-2630/18/9/093020}{New J. of
  Phys. {\bf 18}, 093020}~(2016).

\bibitem{araujo15}
Mateus Ara{\'u}jo, Cyril Branciard, Fabio Costa, Adrien Feix, Christina
  Giarmatzi, and {\v C}aslav Brukner.
\newblock ``Witnessing causal nonseparability''.
\newblock \href{https://dx.doi.org/10.1088/1367-2630/17/10/102001}{New J. Phys.
  {\bf 17}, 102001}~(2015).

\bibitem{chiribella09}
G.~{Chiribella}, G.~M. {D'Ariano}, P.~{Perinotti}, and B.~{Valiron}.
\newblock ``{Quantum computations without definite causal structure}''.
\newblock \href{https://dx.doi.org/10.1103/PhysRevA.88.022318}{Phys. Rev.~A
  {\bf 88}, 022318}~(2013).

\bibitem{araujo14}
M.~{Ara{\'u}jo}, F.~{Costa}, and {\v C}.~{Brukner}.
\newblock ``{Computational Advantage from Quantum-Controlled Ordering of
  Gates}''.
\newblock \href{https://dx.doi.org/10.1103/PhysRevLett.113.250402}{Phys. Rev.
  Lett. {\bf 113}, 250402}~(2014).

\bibitem{feixquantum2015}
Adrien Feix, Mateus Ara\'ujo, and {\v C}aslav Brukner.
\newblock ``Quantum superposition of the order of parties as a communication
  resource''.
\newblock \href{https://dx.doi.org/10.1103/PhysRevA.92.052326}{Phys. Rev. A
  {\bf 92}, 052326}~(2015).

\bibitem{Guerin2016}
Philippe~Allard Gu\'erin, Adrien Feix, Mateus Ara\'ujo, and {\v C}aslav
  Brukner.
\newblock ``Exponential communication complexity advantage from quantum
  superposition of the direction of communication''.
\newblock \href{https://dx.doi.org/10.1103/PhysRevLett.117.100502}{Phys. Rev.
  Lett. {\bf 117}, 100502}~(2016).

\bibitem{Jia2019}
Ding Jia and Fabio Costa.
\newblock ``Causal order as a resource for quantum communication''.
\newblock \href{https://dx.doi.org/10.1103/physreva.100.052319}{Phys. Rev.
  A{\bf 100}}~(2019).

\bibitem{Goswami2020}
Kaumudibikash Goswami and Fabio Costa.
\newblock ``Classical communication through quantum causal structures''.
\newblock \href{https://dx.doi.org/10.1103/PhysRevA.103.042606}{Phys. Rev. A
  {\bf 103}, 042606}~(2021).

\bibitem{hardy2007towards}
L.~{Hardy}.
\newblock ``{Towards quantum gravity: a framework for probabilistic theories
  with non-fixed causal structure}''.
\newblock \href{https://dx.doi.org/10.1088/1751-8113/40/12/S12}{J.~Phys. A:
  Math. Gen. {\bf 40}, 3081--3099}~(2007).

\bibitem{Zych2019}
Magdalena Zych, Fabio Costa, Igor Pikovski, and {\v{C}}aslav Brukner.
\newblock ``Bell's theorem for temporal order''.
\newblock \href{https://dx.doi.org/10.1038/s41467-019-11579-x}{Nat. Commun.
  {\bf 10}, 3772}~(2019).

\bibitem{Hardy2020}
Lucien Hardy.
\newblock ``Implementation of the quantum equivalence principle''.
\newblock In Felix Finster, Domenico Giulini, Johannes Kleiner, and J{\"u}rgen
  Tolksdorf, editors, Progress and Visions in Quantum Theory in View of
  Gravity.
\newblock \href{https://dx.doi.org/10.1007/978-3-030-38941-3_8}{Pages
  189--220}.
\newblock Springer International Publishing~(2020).

\bibitem{parker2021background}
Lachlan Parker and Fabio Costa.
\newblock ``Background {I}ndependence and {Q}uantum {C}ausal {S}tructure''.
\newblock \href{https://dx.doi.org/10.22331/q-2022-11-28-865}{{Quantum} {\bf
  6}, 865}~(2022).

\bibitem{modioperational2012}
Kavan Modi.
\newblock ``Operational approach to open dynamics and quantifying initial
  correlations''.
\newblock \href{https://dx.doi.org/10.1038/srep00581}{Scientific Reports {\bf
  2}, 581}~(2012).

\bibitem{Milz2017}
Simon Milz, Felix~A. Pollock, and Kavan Modi.
\newblock ``An introduction to operational quantum dynamics''.
\newblock \href{https://dx.doi.org/10.1142/S1230161217400169}{Open
  Syst.~Inf.~Dyn. {\bf 24}, 1740016}~(2017).

\bibitem{Pollock2018}
Felix~A. Pollock, C{\'{e}}sar Rodr{\'{\i}}guez-Rosario, Thomas Frauenheim,
  Mauro Paternostro, and Kavan Modi.
\newblock ``Operational markov condition for quantum processes''.
\newblock \href{https://dx.doi.org/10.1103/physrevlett.120.040405}{Phys. Rev.
  Lett. {\bf 120}, 040405}~(2018).

\bibitem{Shrapnel2018}
Sally Shrapnel, Fabio Costa, and Gerard Milburn.
\newblock ``Quantum markovianity as a supervised learning task''.
\newblock \href{https://dx.doi.org/10.1142/s0219749918400105}{Int. J. Quantum
  Inf. {\bf 16}, 1840010}~(2018).

\bibitem{giarmatzi2018witnessing}
Christina Giarmatzi and Fabio Costa.
\newblock ``Witnessing quantum memory in non-{M}arkovian processes''.
\newblock \href{https://dx.doi.org/10.22331/q-2021-04-26-440}{{Quantum} {\bf
  5}, 440}~(2021).

\bibitem{Luchnikov2019}
I.~A. Luchnikov, S.~V. Vintskevich, H.~Ouerdane, and S.~N. Filippov.
\newblock ``Simulation complexity of open quantum dynamics: Connection with
  tensor networks''.
\newblock \href{https://dx.doi.org/10.1103/PhysRevLett.122.160401}{Phys. Rev.
  Lett. {\bf 122}, 160401}~(2019).

\bibitem{Morris2019}
Joshua Morris, Felix~A. Pollock, and Kavan Modi.
\newblock ``Quantifying non-markovian memory in a superconducting quantum
  computer''.
\newblock \href{https://dx.doi.org/10.1142/S123016122250007X}{Open Systems \&
  Information Dynamics {\bf 29}, 2250007}~(2022).

\bibitem{Sandia_report}
Kevin Young, Stephen Bartlett, Robin~J. Blume-Kohout, John~King Gamble, Daniel
  Lobser, Peter Maunz, Erik Nielsen, Timothy~James Proctor, Melissa Revelle,
  and Kenneth~Michael Rudinger.
\newblock ``Diagnosing and destroying non-markovian noise''.
\newblock \href{https://dx.doi.org/10.2172/1671379}{Technical Report
  SAND-2020-10396691214}.
\newblock Sandia National Lab. (SNL-CA)~(2020).

\bibitem{White2020}
G.~A.~L. White, C.~D. Hill, F.~A. Pollock, L.~C.~L. Hollenberg, and K.~Modi.
\newblock ``Demonstration of non-markovian process characterisation and control
  on a quantum processor''.
\newblock \href{https://dx.doi.org/10.1038/s41467-020-20113-3}{Nat Commun {\bf
  11}, 6301}~(2020).

\bibitem{Goswami2021}
K.~Goswami, C.~Giarmatzi, C.~Monterola, S.~Shrapnel, J.~Romero, and F.~Costa.
\newblock ``Experimental characterization of a non-markovian quantum process''.
\newblock \href{https://dx.doi.org/10.1103/PhysRevA.104.022432}{Phys. Rev. A
  {\bf 104}, 022432}~(2021).

\bibitem{White2022}
G.A.L. White, F.A. Pollock, L.C.L. Hollenberg, K.~Modi, and C.D. Hill.
\newblock ``Non-markovian quantum process tomography''.
\newblock \href{https://dx.doi.org/10.1103/PRXQuantum.3.020344}{PRX Quantum
  {\bf 3}, 020344}~(2022).

\bibitem{Xiang2021}
Liang Xiang, Zhiwen Zong, Ze~Zhan, Ying Fei, Chongxin Run, Yaozu Wu, Wenyan
  Jin, Zhilong Jia, Peng Duan, Jianlan Wu, Yi~Yin, and Guoping Guo.
\newblock ``Quantify the non-markovian process with intervening projections in
  a superconducting processor''~(2021).
\newblock  \href{http://arxiv.org/abs/2105.03333}{arXiv:2105.03333}.

\bibitem{giarmatzi2023multitime}
Christina Giarmatzi, Tyler Jones, Alexei Gilchrist, Prasanna Pakkiam, Arkady
  Fedorov, and Fabio Costa.
\newblock ``Multi-time quantum process tomography of a superconducting
  qubit''~(2023).
\newblock  \href{http://arxiv.org/abs/2308.00750}{arXiv:2308.00750}.

\bibitem{Procopio2015}
Lorenzo~M Procopio, Amir Moqanaki, Mateus Ara{\'u}jo, Fabio Costa, Irati~A
  Calafell, Emma~G Dowd, Deny~R Hamel, Lee~A Rozema, {\v C}aslav Brukner, and
  Philip Walther.
\newblock ``Experimental superposition of orders of quantum gates''.
\newblock \href{https://dx.doi.org/10.1038/ncomms8913}{Nat. Commun. {\bf 6},
  7913}~(2015).

\bibitem{Rubinoe1602589}
Giulia Rubino, Lee~A. Rozema, Adrien Feix, Mateus Ara{\'u}jo, Jonas~M. Zeuner,
  Lorenzo~M. Procopio, {\v C}aslav Brukner, and Philip Walther.
\newblock ``Experimental verification of an indefinite causal order''.
\newblock \href{https://dx.doi.org/10.1126/sciadv.1602589}{Sci. Adv. {\bf 3},
  e1602589}~(2017).

\bibitem{rubino2017experimental}
Giulia Rubino, Lee~Arthur Rozema, Francesco Massa, Mateus Ara{\'u}jo, Magdalena
  Zych, {\v{C}}aslav Brukner, and Philip Walther.
\newblock ``Experimental entanglement of temporal orders''~(2017).
\newblock  \href{http://arxiv.org/abs/1712.06884}{arXiv:1712.06884}.

\bibitem{Goswami2018}
K.~Goswami, C.~Giarmatzi, M.~Kewming, F.~Costa, C.~Branciard, J.~Romero, and
  A.~G. White.
\newblock ``Indefinite causal order in a quantum switch''.
\newblock \href{https://dx.doi.org/10.1103/PhysRevLett.121.090503}{Phys. Rev.
  Lett. {\bf 121}, 090503}~(2018).

\bibitem{guo2018experimental}
Yu~Guo, Xiao-Min Hu, Zhi-Bo Hou, Huan Cao, Jin-Ming Cui, Bi-Heng Liu, Yun-Feng
  Huang, Chuan-Feng Li, Guang-Can Guo, and Giulio Chiribella.
\newblock ``Experimental transmission of quantum information using a
  superposition of causal orders''.
\newblock \href{https://dx.doi.org/10.1103/PhysRevLett.124.030502}{Phys. Rev.
  Lett. {\bf 124}, 030502}~(2020).

\bibitem{goswami2018communicating}
K.~Goswami, Y.~Cao, G.~A. Paz-Silva, J.~Romero, and A.~G. White.
\newblock ``Increasing communication capacity via superposition of order''.
\newblock \href{https://dx.doi.org/10.1103/PhysRevResearch.2.033292}{Phys. Rev.
  Research {\bf 2}, 033292}~(2020).

\bibitem{Wei2019}
Kejin Wei, Nora Tischler, Si-Ran Zhao, Yu-Huai Li, Juan~Miguel Arrazola, Yang
  Liu, Weijun Zhang, Hao Li, Lixing You, Zhen Wang, Yu-Ao Chen, Barry~C.
  Sanders, Qiang Zhang, Geoff~J. Pryde, Feihu Xu, and Jian-Wei Pan.
\newblock ``Experimental quantum switching for exponentially superior quantum
  communication complexity''.
\newblock \href{https://dx.doi.org/10.1103/PhysRevLett.122.120504}{Phys. Rev.
  Lett. {\bf 122}, 120504}~(2019).

\bibitem{taddei2020experimental}
M\'arcio~M. Taddei, Jaime Cari\~ne, Daniel Mart\'{\i}nez, Tania Garc\'{\i}a,
  Nayda Guerrero, Alastair~A. Abbott, Mateus Ara\'ujo, Cyril Branciard,
  Esteban~S. G\'omez, Stephen~P. Walborn, Leandro Aolita, and Gustavo Lima.
\newblock ``Computational advantage from the quantum superposition of multiple
  temporal orders of photonic gates''.
\newblock \href{https://dx.doi.org/10.1103/PRXQuantum.2.010320}{PRX Quantum
  {\bf 2}, 010320}~(2021).

\bibitem{Horsman2017}
Dominic Horsman, Chris Heunen, Matthew~F. Pusey, Jonathan Barrett, and
  Robert~W. Spekkens.
\newblock ``Can a quantum state over time resemble a quantum state at a single
  time?''.
\newblock \href{https://dx.doi.org/10.1098/rspa.2017.0395}{Proc. Math. Phys.
  Eng. Sci. {\bf 473}, 20170395}~(2017).

\bibitem{Oeckl2003318}
Robert Oeckl.
\newblock ``A “general boundary” formulation for quantum mechanics and
  quantum gravity''.
\newblock
  \href{https://dx.doi.org/http://dx.doi.org/10.1016/j.physletb.2003.08.043}{Phys.
  Lett. B {\bf 575}, 318--324}~(2003).

\bibitem{chiribellatransforming2008}
G.~Chiribella, G.~M. D'Ariano, and P.~Perinotti.
\newblock ``Transforming quantum operations: {Quantum} supermaps''.
\newblock \href{https://dx.doi.org/10.1209/0295-5075/83/30004}{EPL (Europhysics
  Letters) {\bf 83}, 30004}~(2008).

\bibitem{Perinotti2017}
Paolo Perinotti.
\newblock ``Causal structures and the classification of higher order quantum
  computations''.
\newblock \href{https://dx.doi.org/10.1007/978-3-319-68655-4_7}{Pages
  103--127}.
\newblock Springer International Publishing. Cham~(2017).

\bibitem{Bisio2019}
Alessandro Bisio and Paolo Perinotti.
\newblock ``Theoretical framework for higher-order quantum theory''.
\newblock \href{https://dx.doi.org/10.1098/rspa.2018.0706}{Proc. Math. Phys.
  Eng. Sci. {\bf 475}, 20180706}~(2019).

\bibitem{Aharonov2009}
Yakir Aharonov, Sandu Popescu, Jeff Tollaksen, and Lev Vaidman.
\newblock ``Multiple-time states and multiple-time measurements in quantum
  mechanics''.
\newblock \href{https://dx.doi.org/10.1103/PhysRevA.79.052110}{Phys.\ Rev.\ A
  {\bf 79}, 052110}~(2009).

\bibitem{Silva2014}
Ralph Silva, Yelena Guryanova, Nicolas Brunner, Noah Linden, Anthony~J. Short,
  and Sandu Popescu.
\newblock ``Pre- and postselected quantum states: Density matrices, tomography,
  and kraus operators''.
\newblock \href{https://dx.doi.org/10.1103/PhysRevA.89.012121}{Phys. Rev. A
  {\bf 89}, 012121}~(2014).

\bibitem{Silva2017}
Ralph Silva, Yelena Guryanova, Anthony~J. Short, Paul Skrzypczyk, Nicolas
  Brunner, and Sandu Popescu.
\newblock ``{Connecting processes with indefinite causal order and multi-time
  quantum states}''.
\newblock \href{https://dx.doi.org/10.1088/1367-2630/aa84fe}{New J.\ Phys. {\bf
  19}, 103022}~(2017).

\bibitem{Cotler2016}
Jordan Cotler and Frank Wilczek.
\newblock ``Entangled histories''.
\newblock
  \href{https://dx.doi.org/https://doi.org/10.1088/0031-8949/2016/T168/014004}{Physica
  Scripta {\bf 2016}, 014004}~(2016).

\bibitem{Cotler2017}
Jordan Cotler, Chao-Ming Jian, Xiao-Liang Qi, and Frank Wilczek.
\newblock ``Superdensity operators for spacetime quantum mechanics''.
\newblock \href{https://dx.doi.org/10.1007/jhep09(2018)093}{J. High Energ.
  Phys. {\bf 2018}, 93}~(2018).

\bibitem{Heinosaari2011}
Teiko Heinosaari and Mário Ziman.
\newblock ``The mathematical language of quantum theory: From uncertainty to
  entanglement''.
\newblock \href{https://dx.doi.org/10.1017/CBO9781139031103}{Cambridge
  University Press}. ~(2011).

\bibitem{jamio72}
A.~Jamiołkowski.
\newblock ``Linear transformations which preserve trace and positive
  semidefiniteness of operators''.
\newblock \href{https://dx.doi.org/10.1016/0034-4877(72)90011-0}{Rep. Math.
  Phys {\bf 3}, 275--278}~(1972).

\bibitem{Choi1975}
Man-Duen Choi.
\newblock ``Completely positive linear maps on complex matrices''.
\newblock \href{https://dx.doi.org/10.1016/0024-3795(75)90075-0}{Linear Algebra
  Appl. {\bf 10}, 285--290}~(1975).

\bibitem{Shrapnel2017}
Sally Shrapnel, Fabio Costa, and Gerard Milburn.
\newblock ``Updating the born rule''.
\newblock \href{https://dx.doi.org/10.1088/1367-2630/aabe12}{New J. Phys. {\bf
  20}, 053010}~(2018).

\bibitem{Kretschmann2005}
Dennis Kretschmann and Reinhard~F. Werner.
\newblock ``Quantum channels with memory''.
\newblock \href{https://dx.doi.org/10.1103/PhysRevA.72.062323}{Phys.\ Rev.\ A
  {\bf 72}, 062323}~(2005).

\bibitem{gutoski06}
Gus Gutoski and John Watrous.
\newblock ``Toward a general theory of quantum games''.
\newblock In Proceedings of 39th ACM STOC.
\newblock Pages 565--574.
\newblock ~(2006).
\newblock
  \href{http://arxiv.org/abs/quant-ph/0611234}{arXiv:quant-ph/0611234}.

\bibitem{chiribella08}
G.~Chiribella, G.~M. D'Ariano, and P.~Perinotti.
\newblock ``Quantum circuit architecture''.
\newblock \href{https://dx.doi.org/10.1103/PhysRevLett.101.060401}{Phys. Rev.
  Lett. {\bf 101}, 060401}~(2008).

\bibitem{chiribella09b}
G.~{Chiribella}, G.~M. {D'Ariano}, and P.~{Perinotti}.
\newblock ``{Theoretical framework for quantum networks}''.
\newblock \href{https://dx.doi.org/10.1103/PhysRevA.80.022339}{Phys. Rev.~A
  {\bf 80}, 022339}~(2009).

\bibitem{Bisio2011}
A.~Bisio, G.~Chiribella, G.~D'Ariano, and P.~Perinotti.
\newblock ``Quantum networks: {General} theory and applications''.
\newblock \href{http://www.physics.sk/aps/pubs/2011/aps-11-03/aps-11-03.pdf}{Acta Phys.
  Slovaca {\bf 61}, 273--390}~(2011).

\bibitem{Berta2022}
Mario Berta, Francesco Borderi, Omar Fawzi, and Volkher~B. Scholz.
\newblock ``Semidefinite programming hierarchies for constrained bilinear
  optimization''.
\newblock \href{https://dx.doi.org/10.1007/s10107-021-01650-1}{Mathematical
  Programming {\bf 194}, 781--829}~(2022).

\bibitem{Wechs2019}
Julian Wechs, Alastair~A Abbott, and Cyril Branciard.
\newblock ``On the definition and characterisation of multipartite causal
  (non)separability''.
\newblock \href{https://dx.doi.org/10.1088/1367-2630/aaf352}{New J. of Phys.
  {\bf 21}, 013027}~(2019).

\bibitem{Puseyprivate}
Matthew~F. Pusey.
\newblock Private communication~(2019).

\bibitem{Diaconis1977}
Persi Diaconis.
\newblock ``Finite forms of de {Finetti}'s theorem on exchangeability''.
\newblock \href{https://dx.doi.org/10.1007/BF00486116}{Synthese {\bf 36},
  271--281}~(1977).

\bibitem{Diaconis1980}
P.~Diaconis and D.~Freedman.
\newblock ``{Finite Exchangeable Sequences}''.
\newblock \href{https://dx.doi.org/10.1214/aop/1176994663}{Ann. Probab. {\bf
  8}, 745 -- 764}~(1980).

\bibitem{Koenig2005}
Robert König and Renato Renner.
\newblock ``A de {Finetti} representation for finite symmetric quantum
  states''.
\newblock \href{https://dx.doi.org/10.1063/1.2146188}{J. Math. Phys. {\bf 46},
  122108}~(2005).

\bibitem{Koenig2009}
Robert König and Graeme Mitchison.
\newblock ``{A most compendious and facile quantum de Finetti theorem}''.
\newblock \href{https://dx.doi.org/10.1063/1.3049751}{J. Math. Phys. {\bf 50},
  012105}~(2009).

\end{thebibliography}
\end{document}